\theoremstyle{definition}
\newtheorem{corollary}{Corollary}
\title{MM-PCA: Integrative Analysis of Multi-group and Multi-view Data}
\author{JONATAN KALLUS\\[4pt]
\textit{Mathematical Sciences, University of Gothenburg and Chalmers University of Technology, Gothenburg, Sweden}
\\[2pt]
PATRIK JOHANSSON, SVEN NELANDER\\[4pt]
\textit{Department of Immunology, Genetics and Pathology, Uppsala University, Uppsala, Sweden}
\\[2pt]
REBECKA J\"ORNSTEN$^{\ast}$\\[4pt]
\textit{Mathematical Sciences, University of Gothenburg and Chalmers University of Technology, Gothenburg, Sweden}
\\[2pt]
{jornsten@chalmers.se}
}
\begin{document}

\maketitle
\footnotetext{To whom correspondence should be addressed.}

\begin{abstract}
{Data integration is the problem of combining multiple data groups (studies, cohorts) and/or multiple data views (variables, features). This task is becoming increasingly important in many disciplines
due to the prevalence of large and heterogeneous data sets. 
Data integration commonly aims to identify structure that is consistent across multiple cohorts and feature sets.
While such joint analyses can boost information from single data sets, it is also possible that a globally restrictive integration of heterogeneous data may obscure signal of interest. Here, we therefore propose a data adaptive integration method, allowing for structure in data to be shared across an a priori unknown \emph{subset of cohorts and views}. The method, Multi-group Multi-view Principal Component Analysis (MM-PCA), identifies partially shared, sparse low-rank components. This also results in an integrative bi-clustering across cohorts and views. 
The strengths of MM-PCA are illustrated on simulated data and on 'omics data from The Cancer Genome Atlas. MM-PCA is available as an R-package. }
{Data integration \and Multi-view \and Multi-group \and Bi-clustering}
\end{abstract}

\section{Introduction}
Large-scale studies commonly involve multiple sources of data. These data sets can stem from in-house experiments, public data-base resources,  studies conducted in different experimental settings or obtained through different technological platforms. There is thus an ever increasing need to provide analysis methods that can integrate heterogeneous data from several cohorts or studies and several sets of features. 

\begin{figure}
  \centering
    \includegraphics[scale=0.4]{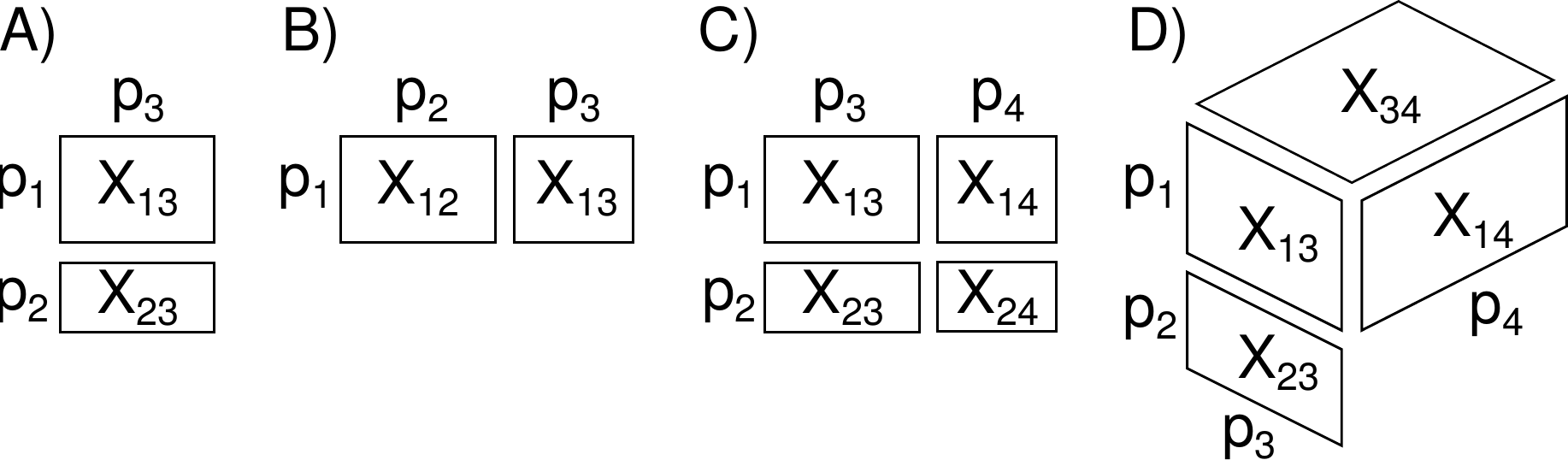}
  \caption{Examples of A) multi-group, B) multi-view, C) multi-group and multi-view, and D) augmented multi-view data. $p_i$ is the dimensionality of view $i$ and $X_{ij}$ is the data matrix that contains data for views $i$ and $j$.}
  \label{fig:amd}
\end{figure}

The integration of data from multiple cohorts observed across a common set of features is commonly referred to as multi-group integration (Figure~\ref{fig:amd}A). Similarly, the multi-view (multi-source, multi-modal) data integration problem focuses on one cohort where multiple sets of features or measurements are available (Figure~\ref{fig:amd}B). The multi-group and multi-view settings are to some extent equivalent algorithmically. By transposing the data matrices of a multi-group data set, it can be analyzed by multi-view methods and vice versa.

Several methods for multi-view (or multi-group) integrative data analysis are based on PCA or SVD. 
Here, we provide a brief summary while an extensive literature review can be found in the online supplement (Table~\ref{tab:review}). 
Methods that extend to more than two matrices  are generally limited to finding a joint structure shared by all matrices. 
The interpretational value of multi-group (or multi-view) data integration can be improved by also modeling properties that are only present in one data matrix. \cite{van_der_kloet_separating_2016} present a comparative review of such methods, including JIVE \citep{lock_joint_2013}, DISCO-SCA \citep{schouteden_sca_2013} and O2-PLS \citep{trygg_o2-pls_2002},  in the context of multi-view data. The three methods
all fit a global model for the joint structure and individual models to the residuals in each matrix. They differ primarily in how strongly they impose assumptions of orthogonality of the individual structures of each matrix. 
Recently, \cite{feng_angle-based_2018} proposed AJIVE which uses the same model as JIVE but with improved computational efficiency for parameter estimation. \cite{zhu_generalized_2018} developed GIPCA: a generalization of JIVE to different data types which also allows for the presence of missing values. 

To enhance interpretability further still, sparsity constraints can be applied to the structure parameterizations (e.g.\  the loadings of an SVD-based analysis). For example, \cite{wang_sparse_2015} proposed sMVMF to provide a decomposition of multi-group data into global and individual signal while imposing sparsity penalties on the global and individual loadings. This results in a stratification of features into sets that are; (i) globally relevant to explain all data groups; (ii) individually relevant; and (iii) irrelevant. 

Simultaneous data integration for multi-group \emph{and} multi-view data (Figure~1C) was first addressed by \cite{lofstedt_bi-modal_2012}. The method, bi-modal OnPLS,   
can decompose such datasets into global and individual signal. Linked matrix factorization by \cite{oconnell_linked_2017} later extended JIVE to the same generality, with the additional ability to handle elementwise missing data. BIDIFAC \citep{park_integrative_2019} is another extension of JIVE which, in addition, allows signal to be group-global and view-global.

In this paper, we aim to generalize the multi-group and multi-view integration to allow information to be only partially shared across a subset of groups and/or a subset of views. 
In many applications, a dataset may contain heterogenous matrices, and it is therefore unlikely that a globally shared signal exists.
In such datasets, methods that only look for globally joint signal are forced to explain all variation as individual.

\cite{argelaguet_multiomics_2018} recently developed MOFA, which is, to our knowledge, the only method for multi-view data that can find joint signal shared only for a subset of matrices. \cite{klami_group-sparse_2014} proposed CMF to perform multi-group and multi-view integration with loadings that are shared by a subset of matrices. Both MOFA and CMF utilize a Bayesian setup, where an automatic relevance determination (ARD) prior is used for each loading so that some are active for only a subset of its associated data matrices. The ARD prior does not generate exact zeros. Consequently, it is difficult to interpret results in terms of which data matrices have a common structure, and how that common structure should be characterized. Furthermore, it does not favor low rank models. CMF does not produce sparse loadings. MOFA combines an ARD prior with a spike-and-slab prior where the role of the spike-and-slab prior is to produce sparse loadings (feature selection) but this setup has not been generalized to a multi-group and multi-view setting.

We propose a frequentist multi-group and multi-view data integration method, MM-PCA, that utilizes a symmetric treatment of views and groups, paired with regularization, to achieve; (i) partial component sharing across subsets of groups and/or views; (ii) orthogonal components through an efficient parametrization; (iii) sparse loadings and scores with respect to each component; and, (iv) adaptive rank selection for the integrative procedure. Comparing MM-PCA to CMF, CMF offers an approximate solution to (i) and does not provide solutions for (ii-iv) which greatly reduces the interpretability of the CMF data integration procedure. 

MM-PCA can be used for explorative analysis, for example integrative bi-clustering and the identification of key features that drive cohort stratification. In integrative genomics, such driver-strata components can have relevance for development of new therapeutic solutions. To demonstrate the strength and usefulness of MM-PCA we conduct simulations and apply MM-PCA for the analysis of data from the Cancer Genome Atlas.

The rest of this article is organized as follows. Section \ref{sec:amvpca} defines the model used in MM-PCA and how its parameters are estimated and should be interpreted. 
In Section \ref{sec:sim} we compare MM-PCA with CMF, the only existing method of similar generality, and with JIVE, a popular method for integrative analysis in genomics, in three simulation based experiments. In Section \ref{sec:bio} we apply MM-PCA to a data set from the Cancer Genome Atlas (TCGA, \cite{tcga_stuart_2013}).
Section \ref{sec:disc} concludes the article and discusses limitations and potential future improvements.

\section{Multi-group and Multi-view Principal Component Analysis}
\label{sec:amvpca}

To facilitate the generalization of multi-group and multi-view integration, we treat the data integration as symmetric with respect to group or view. \cite{klami_group-sparse_2014} termed this approach augmented multi-view data (AMD). To define AMD, let us consider $n_m$ data matrices. Two matrices can be directly related by sharing rows or columns, or indirectly related via other matrices. More precisely, two matrices share rows if row $i$ in each of the two matrices concern the same object, and similarly for columns. This defines a partition of the $2n_m$ sets of rows/columns into $n_v$ sets. We call such sets \emph{views}, regardless of whether they consist of observations or features. 
For example, consider two data cohorts; cohort 1 with $p_1$ patients and cohort 2 with $p_2$ patients. Each cohort has two feature sets; 
the expression of $p_3$ genes and the methylation state of $p_4$ probes (Figure \ref{fig:amd}C). There are 8 sets of rows/columns for 4 unique views. The generality of AMD allows for the inclusion of a $p_3\times p_4$ matrix with prior knowledge of covariance between gene expression and methylation. It also allows for inclusion of feature sets for which data are not available for all cohorts (Figure \ref{fig:amd}D).

Let $X_{ij}~((i,j)\in\mathcal{S})$ be a $p_i\times p_j$ data matrix, where $\mathcal{S}$ is the set of view pairs $(i,j)$ for which a data matrix exists. The cardinality $|\mathcal{S}|=n_m$ is the number of matrices in the data set in total. MM-PCA aims to find a rank $k$ integrative solution to minimize the following loss
\begin{align}
\label{eq:mmpcaeq}
\min_{D_\cdot,V_\cdot} \sum_{(i,j)\in\mathcal{S}}||X_{ij}-V_iD_iD_jV_j^T||_F^2,
\end{align}
subject to $V_i^TV_i=I_k,i=1,\ldots,n_v$. $V_i$ ($p_i\times k$) are the view-specific loading matrices and $D_i$ are diagonal $k \times k$ matrices. 

The optimal MM-PCA decomposition of a single data matrix $X_{12}$ is $V_1D_1D_2V_2^T$, where $V_1D_1D_2$ is equal to the standard PCA scores for $X_{12}$ and $V_2$ is equal to the PCA loadings. Since MM-PCA treats rows and columns symmetrically, we call all $V_\cdot$ matrices loadings.

Without further constraints each column of $V_i$ would influence all matrices $X_{i\cdot}$. 
By imposing exact zeros on some elements of the  $D_\cdot$ matrices we can constrain the MM-PCA components to be shared between only a subset of data matrices. 
Since the $D_\cdot$ matrices are diagonal and of the same size, $k \times k$, they can  be compactly represented by a $k\times n_v$ matrix $\mathscr{D}$ where each column is equal to the diagonal of one of the $D_\cdot$ matrices and each row corresponds to a MM-component. We call $\mathscr{D}$ the \emph{augmented D matrix}. 

We now illustrate how a  \emph{joint structure} is defined by $\mathscr{D}$. A zero at the element corresponding to component $c \in \{1,\ldots, k\}$ (row) and view $j$ (column) means that component $c$ is not capturing variation in data matrices with data for view $j$. Thus, all views with non-zero values for component $c$ contain variation that is jointly captured by component $c$. For example, consider matrices $X_{12}$, $X_{13}$ and $X_{14}$. If
$\mathscr{D}_{c,2}=0$ and $\mathscr{D}_{c,j}\neq 0, j=1,3,4$,
then component $c$ would be shared between matrices $X_{13}$ and $X_{14}$ but not $X_{12}$. If several rows of $\mathscr{D}$ have identical patterns of zeros, then the corresponding components make up a subspace
where the views with non-zero values co-vary.  

$\mathscr{D}$ defines both the proportion of variation explained by estimated components and how much of the explained variation that is shared between data matrices. Denote the optimal $k$-component MM-PCA solution for matrices $X_{ij}, (i,j) \in \mathcal{S}$  by 
$\mathcal{M}_{k,n_v}=\{\mathscr{D}_{k \times n_v}, V_j, j=1,\ldots,n_v\}$. The vector $\mathscr{D}_{ci}\mathscr{D}_{cj},c\in \{1,\ldots, k\}$ is the vector of singular values of the approximation of $X_{ij}$. Thus, the proportion of variation explained by the $k$-rank solution $\mathcal{M}_{k,n_v}$ of $X_{ij}$ is given by 
$$
R^2(X_{ij} \mid \mathcal{M}_{k,n_v}) =  ||V_iD_iD_jV_j^T||_F^2/||X_{ij}||_F^2=\sum_{c=1}^k\mathscr{D}_{ci}^2\mathscr{D}_{cj}^2/||X_{ij}||_F^2.
$$
Similarly, the proportion of variation in a matrix $X_{ij}$ that can be linearly predicted with another matrix $X_{i'j'}$ can be computed as follows: 
$$R^2(X_{ij} \mid \mathcal{M}_{k,n_v} , X_{i'j'}) = 
\sum_{c\in\mathcal{C}}\mathscr{D}_{ci}^2\mathscr{D}_{cj}^2/||X_{ij}||_F^2, \ \ \mathcal{C} = \{c: \mathscr{D}_{ci'}\mathscr{D}_{cj'}\ \neq 0\},$$
where $\mathcal{C}$ is the set of components that influence $X_{i'j'}$. 
This gives a directed relation between each pair of data matrices, and can be used to form hypotheses of how groups of variables influence each other (Figure \ref{fig:bioex}B).

\subsection{Estimation}
\label{sec:estimation}
We estimate the parameters $V_\cdot$ and $D_\cdot$ using quasi-Newton local optimization of the sum of the objective \eqref{eq:mmpcaeq} and four penalty functions (to be defined in Section \ref{sec:obj}).

Two problems arise ; 
(i) for relevant biological datasets the resulting optimization problem involves a massive amount of parameters (order of $10^4$ to $10^5$); and (ii) the constraints are challenging to handle numerically. Massive amount of parameters is not an insurmountable problem in optimization, as seen for example in recent advances in the field of deep neural networks. The constraints are challenging, however, in that they are non-linear and constrain the set of feasible solutions to a manifold so that all solutions are on the border of the set. 

To address these problems, we introduce a parsimonious parametrization for the loading matrices. This not only reduces the number of parameters, but, more importantly, also removes the optimization constraints. 

\subsubsection{Parsimonious Parametrization}
\label{sec:geneuler}

Let $V$ (e.g. $V_1$) be $k$ columns of a $p$-dimensional rotation matrix, i.e.\ it is a matrix of $pk$ parameters constrained to have orthogonal columns of unit $\ell_2$-norm. Such matrices are called \emph{k-frames} and effectively have $pk-k(k+1)/2$ free parameters \citep{james_normal_1954}. 

To find a parsimonious parametrization of $V$ we make use of a parametrization based on Givens rotation matrices \citep{shepard_representation_2015}.
A rotation matrix for two-dimensional rotation can be derived from angles, $\theta$.
Generalizing this definition to $p$-dimensional space requires the multiplication of several rotation matrices, each constituting a two-dimensional rotation in a two-dimensional subspace. Such matrices \begin{align*}
R_{ij}(\theta)=\begin{bmatrix}I_{i-1} & 0 & 0 & 0 & 0\\
0 & \cos\theta & 0 & -\sin\theta & 0\\
0 & 0 & I_{j-i-1} & 0 & 0\\
0 & \sin\theta & 0 & \cos\theta & 0\\
0 & 0 & 0 & 0 & I_{p-j}\end{bmatrix}
\end{align*}
are known as Givens rotation matrices.

We include in the supplement a proof in our notation that for $p\geq k$, an arbitrary $p$-dimensional $k$-frame $V$ can be expressed as a multiplication of $m=pk-k(k+1)/2$ Givens rotation matrices as 
$V(\xi_{\cdot})=(\prod_{i=1}^p\prod_{j>i}^k R_{ij}(\xi_{\cdot,ij}))I_{pk}$,
where $I_{pk}$ are the first $k$ columns of the $p$-dimensional identity matrix. 
Each element $\xi_{\cdot,ij} $ is an angle, and elements are collected in  
$\xi_{\cdot}$, a lower triangular matrix with zero diagonal, of dimension $p \times k$. The function $V(\cdot)$
thus yields a parsimonious parametrization of k-frames from $m$ angles.

This parametrization was recently used by \cite{pourzanjani_general_2017} for Bayesian inference of probabilistic PCA to facilitate sampling from a distribution over k-frames. It has not previously been used for optimization based inference of SVD nor in the context of SVD based data integration, and we have adapted the framework to more conveniently be applied for such tasks.

With this parametrization of k-frames we can restate \eqref{eq:mmpcaeq} without the challenging orthogonality constraints on the loadings.
That is, we write for view $i$, $V_i=V(\xi_i)$, 
$\xi_i$
of sizes $p_i\times k$. 
When computing $V$, the product of several large matrices, we make use of the sparse and regular structure of Givens rotation matrices.

We use the inverse of $V(\cdot)$, stated in the supplement, to find initial parameter values for the optimization problem \eqref{eq:mmpcaeq}.

\subsubsection{Loss, penalties and model selection}
\label{sec:obj}
We now restate the loss function as
\begin{align*}
\sum_{(i,j)\in\mathcal{S}}||X_{ij}-V(\xi_i)D_iD_jV(\xi_j)^T||_F^2 + \sum_{c=1}^4 \lambda_c\text{P}_c(\xi,D),
\end{align*}
with four added penalty terms to achieve; (i) data integration through the identification of partially shared components; (ii) rank selection; (iii) sparse loadings; and (iv) variable selection. 

Data integration  is achieved via an $\ell_1$ sparsity penalty on the diagonals of $D_\cdot$: $P_1(\cdot,D) = \lambda_1\sum_{i=1}^{n_v}||D_i||_1$, where $n_v$ is the number of views and $||\cdot||_1$ is the sum of the absolute values of the matrix' elements. This penalty plays a similar role to the ARD prior on loadings used in CMF. Unlike CMF, however, we have separated the loadings ($V_\cdot$) from their norms ($D_\cdot$). This allows us to define an integration penalty that only involves a small fraction of the parameters. 

The rank of solutions is penalized by adding a group penalty on each diagonal position across all views: $P_2(\cdot,D) =  \lambda_2\sum_{c=1}^k\sqrt{\sum_{i=1}^{n_v}(D_i)_{cc}^2}$.

Sparsity of loadings is also achieved with an $\ell_1$ penalty. To avoid solutions where less important components tend to be more sparse, each $V$ is multiplied by its associated $D$: $P_3(\xi,D) = \lambda_3n_v^{-1}\sum_{i=1}^{n_v}||V(\xi_i)D_i||_1$. 

Variable selection is achieved with a group penalty on each row of the loading matrices: $P_4(\xi,D)=\lambda_4n_v^{-1}\sum_{i=1}^{n_v}\sum_{d=1}^{p_i}||(V(\xi_i)D_i)_{d\cdot}||_2$, where $p_i$ is the dimension of view $i$. 

The division by $n_v$ is in both cases used to make all hyper parameters have approximately the same scale.

To allow for missing elements in the data matrices, the corresponding terms of the loss function are removed. For each data matrix $X_{ij}$ define a matrix $M_{ij}$ whose elements are zero if the same position in $X_{ij}$ is missing and one otherwise. The MM-PCA loss function thus becomes
\begin{align}
\begin{split}
\label{eq:objective}
\sum_{(i,j)\in\mathcal{S}}||M_{ij}\odot\left(X_{ij}-V(\xi_i)D_iD_jV(\xi_j)^T\right)||_F^2+\lambda_1\sum_{i=1}^{n_v}||D_i||_1+\\
+\lambda_2\sum_{c=1}^k\sqrt{\sum_{i=1}^{n_v}(D_i)_{cc}^2}+\frac{\lambda_3}{n_v}\sum_{i=1}^{n_v}||V(\xi_i)D_i||_1+\frac{\lambda_4}{n_v}\sum_{i=1}^{n_v}\sum_{d=1}^{p_i}||(V(\xi_i)D_i)_{d\cdot}||_2,
\end{split}
\end{align}
where $\mathcal{S}$ is the set of pairs $(i,j)$ for which a data matrix $X_{ij}$ exists with rows concerning view $i$ and columns concerning view $j$ and $\odot$ is elementwise multiplication. This optimization problem has $n_vk+\sum_{i=1}^{n_v}(p_ik-k(k+1)/2)$ variables.

The  $D_\cdot$ matrices only enter the loss function in pairs, and thus  $D_iD_j, (i,j)\in\mathcal{S}$ (Figure \ref{fig:R2}B) are more directly informative than the $D_\cdot$ themselves (Figure \ref{fig:R2}A). The former contain the joint structure, the amount of variation explained and the amount of shared variation between matrices (as defined above). Still, values in $D_\cdot$ are not arbitrary. The penalty terms promote solutions where the elements of the diagonal matrices are of similar size, and each $D_\cdot$ may partake in several matrices.

Optimization of the objective function \eqref{eq:objective} is performed with the Broyden-Fletcher-Goldfarb-Shanno (BFGS) algorithm, a quasi-Newton method that makes use of the gradients of the objective function (gradients shown in the supplement) for iteratively updating an estimate of the Hessian. 
Details on how the various penalties are scaled to remove their dependencies on scale and size of the data matrices are given in the supplement. 

Values for the penalty parameters $\lambda=(\lambda_1,\ldots,\lambda_4)$ are chosen by dividing the elements of the data matrices into a test set and a training set. Non-missing elements of the data matrices are randomly assigned to the test set (with probability 0.1). For given values of $\lambda$, the model is fitted with the elements of the test set treated as missing data. A $\lambda$ is sought that minimizes the squared reconstruction error of elements in the test set. The user can specify any sequence of candidate $\lambda$ values. To limit the search space we typically choose candidate values by deciding which penalties that should be active (specified as four binary parameters $(b_1,\ldots,b_4)$) and choosing a range and step size of values $\lambda_0$. The set of candidate values is then given by $\lambda=(\lambda_0b_1,\ldots,\lambda_0b_4)$ for each value of $\lambda_0$. After the optimal $\lambda$ has been found, the model is finally estimated using all data (including the test set). Each candidate value of $\lambda$ can be tried in parallel, making it possible to utilize many processor cores simultaneously for parameter tuning.

\subsubsection{Normalization and Initialization}

It is advisable to center both rows and columns of each $X$ similarly to normalization for ordinary PCA.
One may also normalize data matrices in relation to each other through scaling e.g.\ to equal Frobenius norm, (square) Frobenius norm proportional to the number of elements, rows or columns, or equal Frobenius norm of the first principal components. Further matrix-wise normalizations are discussed in Section 2.5 of \cite{lofstedt_bi-modal_2012}.
In the end, data normalization must depend on data or domain knowledge.

We initialize the optimization with an approximation of the solution under the assumptions that all components are globally joint. Alternatively, MM-PCA can use CMF to find initial values. The R package allows for both.

\subsection{Components of the MM-PCA solution}
To summarize the above sections, MM-PCA provides a multi-group and multi-view  structure decomposition of data matrices that includes; 
\begin{enumerate}[(i)]
    \item \textbf{Automatic rank selection.}~~MM-PCA estimates the optimal rank by using a group penalty on the rows of $\mathscr{D}$. The estimated rank is given by the number of rows in $\mathscr{D}$ that contain non-zero elements.
    \item \textbf{Interpretable sparse loadings.}~~Sparsity in loadings, that is in columns of $V_\cdot$, increases interpretability.
    \item \textbf{Multi-group multi-view bi-clustering.}~~
    Sparse loadings facilitate a direct translation from the MM-PCA decomposition to an integrative bi-clustering of the data. Each column of the loading matrix partitions the observations or variables of a view into three groups: positive, zero and negative. The use of sparse loadings for bi-clustering of a single data matrix was explored in \cite{lee_biclustering_2010} which, with MM-PCA, is thus generalized to the multi-group and multi-view setting.
    \item \textbf{Detection of outliers and irrelevant variables.}~~Removing outliers (row or columns) from the model is achieved via a group penalty on rows of $V_\cdot$. This penalizes difference in sparsity structure between loadings for the same view. Removing an outlier means that the corresponding row or column is considered to contain noise only.
    \item \textbf{Imputation of missing values and missing matrices of data.}~~
    Data matrix approximations given by the fitted model have no missing values, thus achieving imputation. Furthermore, data matrices that are not present when fitting the model, but that has views that are present through other matrices, can be predicted. E.g.\ if $X_{ij'}$, $X_{i'j}$ and $X_{i'j'}$ are present, a missing matrix consisting of views $i$ and $j$ is predicted by $\hat{X}_{ij}=V_iD_iD_jV_j^T$.
\end{enumerate}

\section{Method Evaluation on Simulated Data}
\label{sec:sim}
We investigate MM-PCA performance via three simulation studies. Simulation 1 is aimed at demonstrating MM-PCA's ability, compared with CMF, to correctly identify partially shared structure in a data set. Simulation 2 investigates MM-PCA's automatic rank selection and ability to identify globally joint signal, where we compare against CMF and JIVE.  Finally, simulation 3 demonstrates MM-PCA's ability to identify sparse loadings.

\subsection{Simulation 1} 
We generate four data matrices of size $10\times 10$, thought to represent four different cohorts (views 1 through 4) observed over a common set of features  (view 5). 
Each matrix is the sum of a rank one matrix (the signal) and noise.  Data matrices one and two share a rank 1 component, and similarly for matrices three and four.
That is, 
\begin{align}
\begin{split}
\label{eq:sim1}
X_i&=u_iv_i^T+c\varepsilon_i,~i=1,\ldots,4, 
\hspace{.5cm}
u_i^T\sim N(0, I)\\
v_i&=\tilde{v}_i/|\tilde{v}_i|_2,
\tilde{v}_1^T=\tilde{v}_2^T\sim N(0, I),
\tilde{v}_3^T=\tilde{v}_4^T\sim N(0, I)
\end{split}
\end{align}
where each element of each $\varepsilon_i$ is independent standard normal and $c$ is chosen to obtain a fixed signal-to-noise ratio (SNR).

Both MM-PCA and CMF are run with default settings. The correct noiseless data rank (two) is given as input to both methods. This gives an advantage to CMF since it does not penalize the rank of the estimated model.

We compare  the estimated joint structure to the correct structure. Recall from Section \ref{sec:amvpca}, the structure of the decomposition is captured by the augmented $D$-matrix, $\mathscr{D}$. For row (component) $k$ of $\mathscr{D}$, the non-zero elements describe which views share this component. 
Here, the correct structure
is that row $k$ of $\mathscr{D}$ is non-zero for only columns (views) $(1,2)$ \emph{or} $(3,4)$. As a metric of structural error, we thus compute the  RMSE distance between the MM-PCA $\mathscr{D}$-solution and the correct $\mathscr{D}$. 
For CMF the estimated $\mathscr{D}$ is directly given by the $\ell_2$ norm of the columns of its loading matrix solution (see supplement). We also evaluate the reconstruction error with regard to the signal term of each matrix.

Results, presented in Figure \ref{fig:ex1}, are based on 100 simulation runs.
MM-PCA outperforms CMF in finding the correct joint structure (lower panel in Figure \ref{fig:ex1}). 
CMF results do not provide explicit zeroes which  
%The high structure error of CMF 
makes it difficult for the user to select a threshold to decide if a signal is shared between matrices. Here, a range of threshold values were applied and the best results are reported. By contrast, when SNR is moderate to high, MM-PCA often finds the correct structure exactly.

In terms of reconstruction error (upper panel), CMF performs better than MM-PCA due to MM-PCA adhering to the more constrained (and correct) structure and due to penalization bias. Including a de\-biasing step in MM-PCA has been left for future work.

\begin{figure}
  \centering
    \includegraphics[scale=0.5]{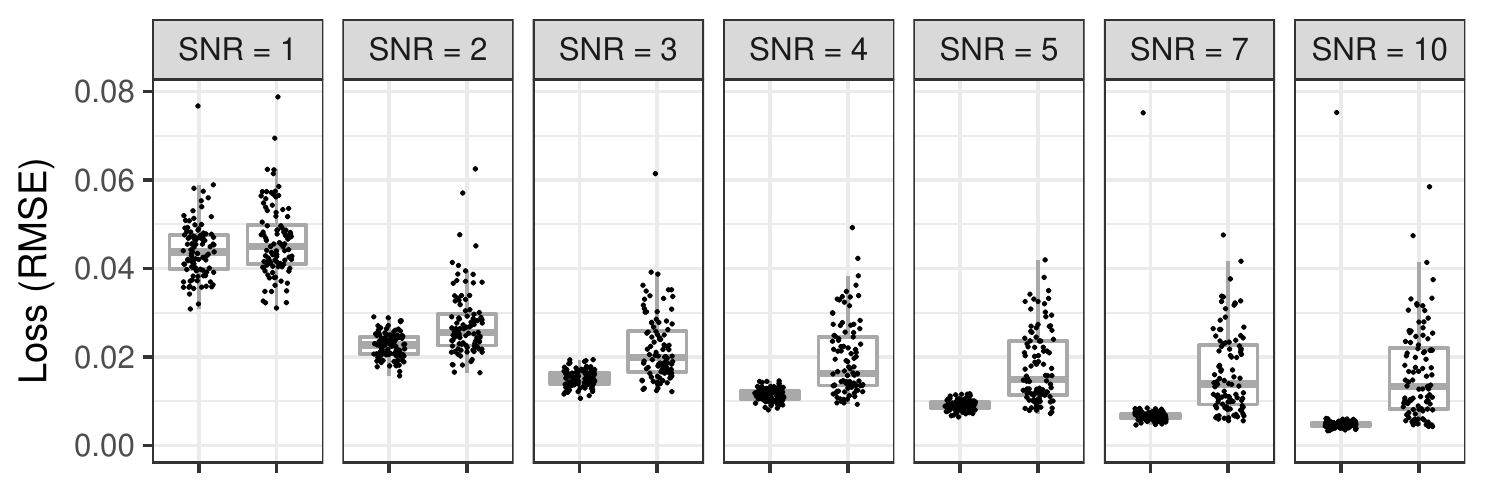}\\
    \vspace{-1.65mm}
    \includegraphics[scale=0.5]{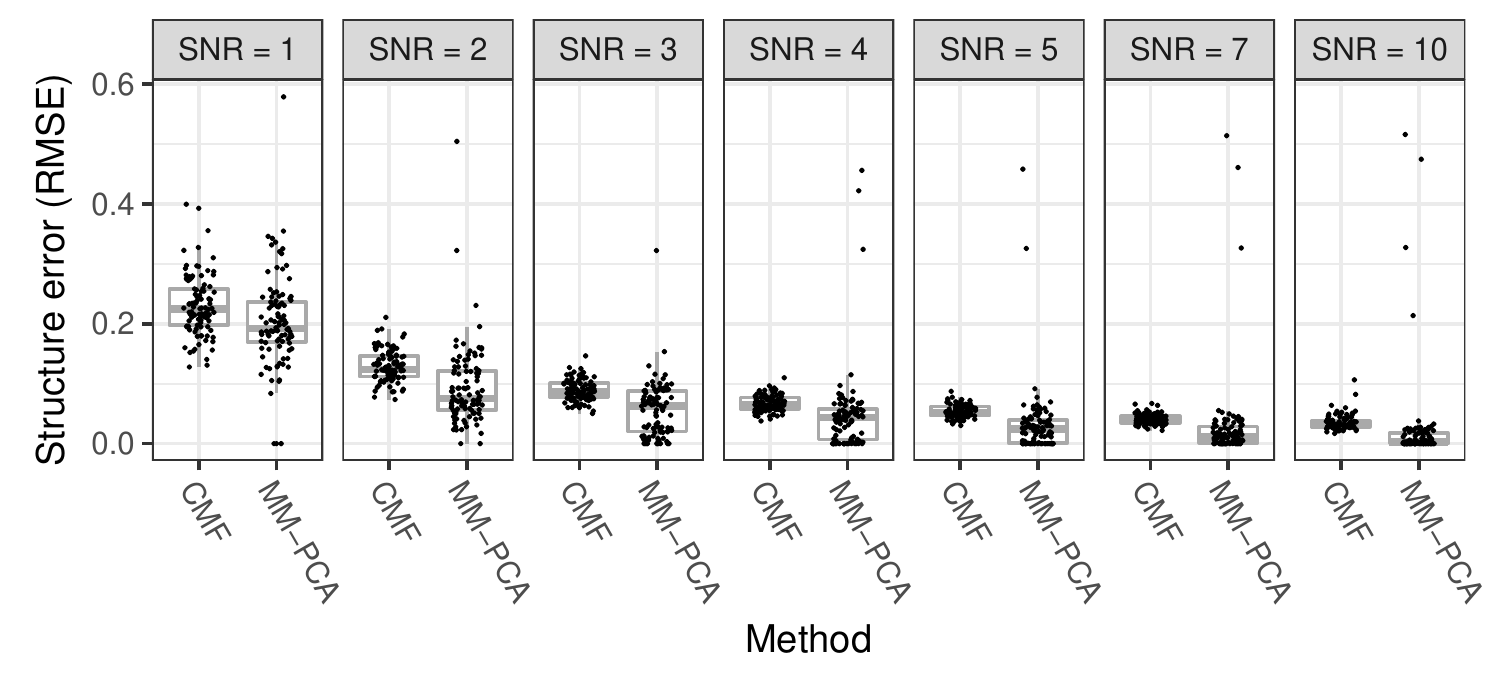}
  \caption{Simulation 1 compares MM-PCA and CMF in terms of estimating the joint structure of four matrices. The upper panel shows reconstruction loss and the lower panel shows error in joint structure. MM-PCA outperforms CMF in terms of finding the correct joint structure but has a higher reconstruction loss due to adhering to the more constrained (and correct) joint structure and due to penalization bias. RMSE, root mean square error. SNR, signal-to-noise ratio.}
  \label{fig:ex1}
\end{figure}

\subsection{Simulation 2} 
We generate three data matrices of size $n\times p$, representing three cohorts (views 1, 2 and 3) observed over a common set of features (view 4). Each observation is associated with either a loading that is common to all three matrices (i.e. globally joint) or with a loading that is unique for the matrix of that observation. We 
vary the proportion of observations that are associated with the globally joint loading.
Thus, in each matrix, each row lies either along the globally joint direction or along the individual direction of the matrix. The individual directions $v_1, v_2, v_3$ are not orthogonal, they have pairwise angle of $\pi/3$. Data is generated according to:
\begin{align}
\begin{split}
\label{eq:sim2}
X_i&=[v_iu_i^T\quad v_4w_i^T]^T + \varepsilon_i,~i=1,\ldots,3\\
u_i^T&\sim N(0, I_{[p (1-p_J)]}),~w_i^T\sim N(0, I_{[p p_J]})\\
v_4&=\tilde{v}_4/|\tilde{v}_4|_2,~\tilde{v}_4^T\sim N(0, I_{n})
\end{split}
\end{align}
where $p_J$ is the proportion of data coming from the joint loading and where each element of each $\varepsilon_i$ is independent normal with mean zero and standard deviation $0.05$.
We generate data in two settings; $(n,p)=(100,25)$ a low-dimensional case, and $(n,p)=(10,40)$ a high-dimensional case (with respect to the direction of integration).

MM-PCA is run with maximum rank 10 (the correct rank for this problem is four). CMF is run with both rank four and rank 10. Since results were slightly, but significantly, better  for CMF with rank 10 these are reported. For CMF we find the approximate rank of the solution by thresholding $\mathscr{D}$. As in Simulation 1, a range of threshold values were tried and the best  results are reported.  JIVE is run with its permutation testing method for rank selection. JIVE has an option to assume that the individual loadings are orthogonal. This assumption makes JIVE more robust to noise, though it is incorrect for this data. JIVE is run in both settings for comparison.

We evaluate the performance of the methods based on the estimated joint direction of the data. If a method selects a model that does not estimate the existence of any globally joint direction, the dominating direction of the residual is used as estimate of the joint direction. This is what a sensible user would do if the goal is to find the globally joint direction rather than to decide if such a signal exists. We choose to focus on this measure in order to facilitate a comparison of methods of different complexity.
Accuracy is defined as the estimated joint direction having a smaller angular distance to the correct joint direction than to each individual direction and to the dominating direction of the noise. 

Results from 100 simulation runs are shown in Figure \ref{fig:ex2}, the low dimensional case in Figures \ref{fig:ex2}(A,C,E) and the high dimensional case in Figures \ref{fig:ex2}(B,D,F). 
In the low dimensional case, CMF finds the globally joint loading more often than MM-PCA (Figure \ref{fig:ex2}A). In the high dimensional case, MM-PCA is better (Figure \ref{fig:ex2}B). JIVE without orthogonal constraint is the best performing method for low joint proportion, but the method performs poorly and unpredictably for higher joint proportions. This makes the method difficult to use in practice where the strength of the joint signal is unknown. As the individual signals weaken (due to the joint proportion increasing) JIVE tends to fail to separate the individual signals and misclassify them as joint. JIVE only looks for globally joint signal (here correct) which gives JIVE a smaller search space compared to both MM-PCA and CMF. JIVE with orthogonal constraint shows a robust performance, but the performance is poor compared to MM-PCA and CMF. MM-PCA shows the most robust performance overall.

Figures \ref{fig:ex2}C-D examine the methods' estimates of the rank of the globally joint signal. The correct rank is one except when the joint proportion is zero where the correct rank is zero. In the low dimensional case CMF performs best while MM-PCA performs best in the high dimensional case where CMF greatly overestimates.  JIVE overestimates when the joint and individual signal have near the same magnitude (joint proportion  0.5).
Figures \ref{fig:ex2}E-F examine the estimated rank of each matrix. The correct rank is two, the globally joint component and the individual component, except when the joint proportion is zero or one where the correct rank is one. 
Results show that the rank selection of MM-PCA works as intended in this setting and that the performance of MM-PCA is better than JIVE, despite MM-PCA addressing a much wider range of integrative analysis settings. CMF again performs well in the low dimensional case but overestimates in the high dimensional case. MM-PCA
performs well in both settings, albeit being conservative.

\begin{figure}
  \centering
    \hspace{13mm}
    \includegraphics[scale=0.5]{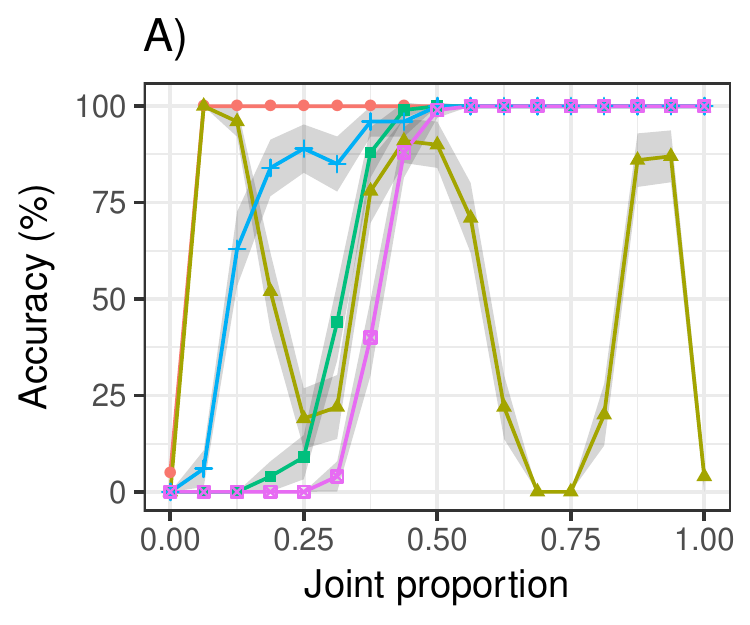}
    \hspace{-3mm}
    \includegraphics[scale=0.5]{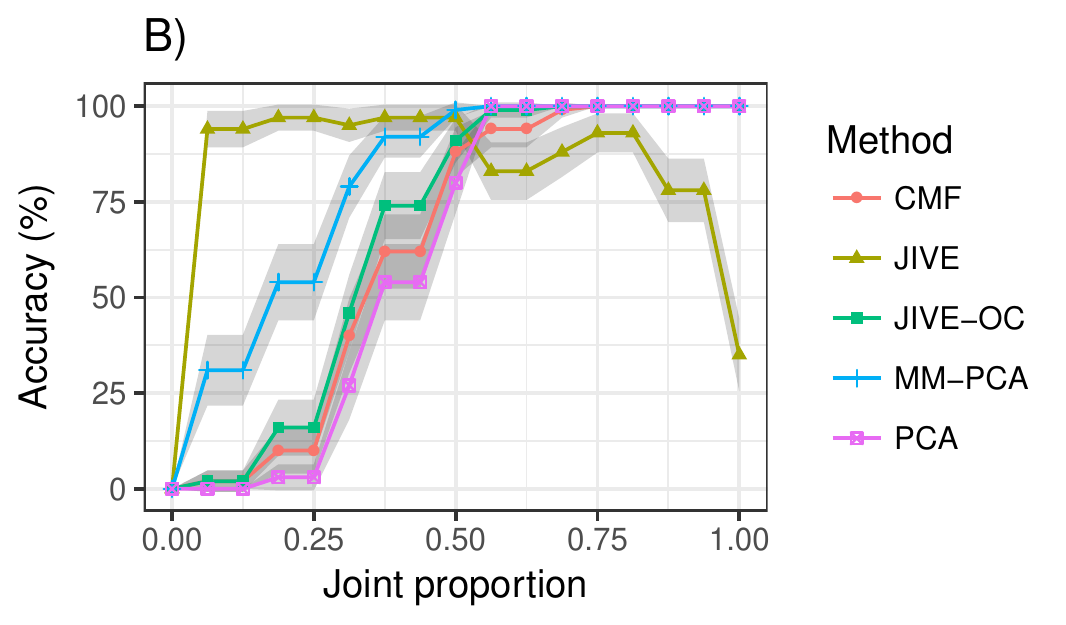}\\
    \includegraphics[scale=0.5]{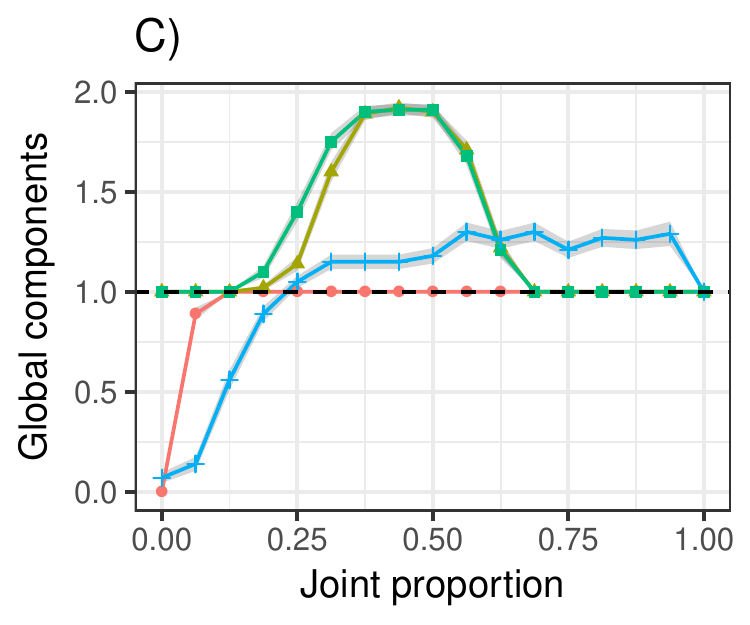}
    \includegraphics[scale=0.5]{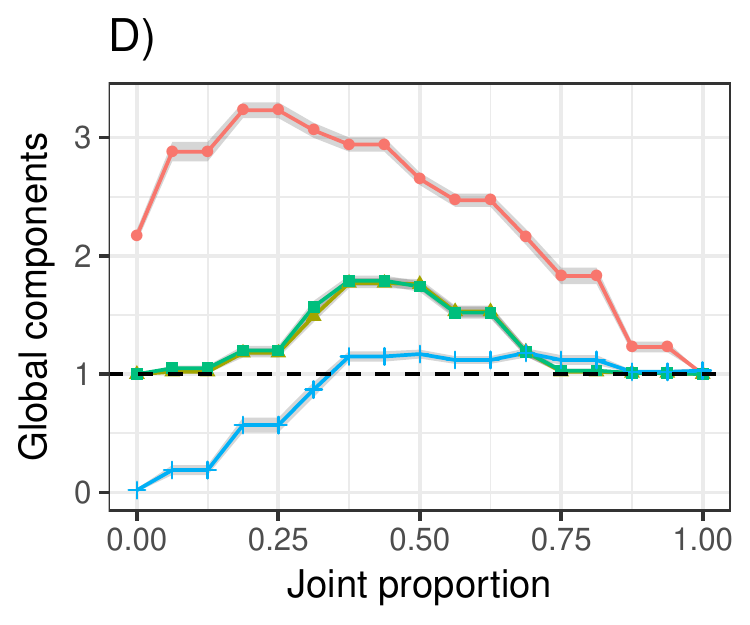}\\
    \includegraphics[scale=0.5]{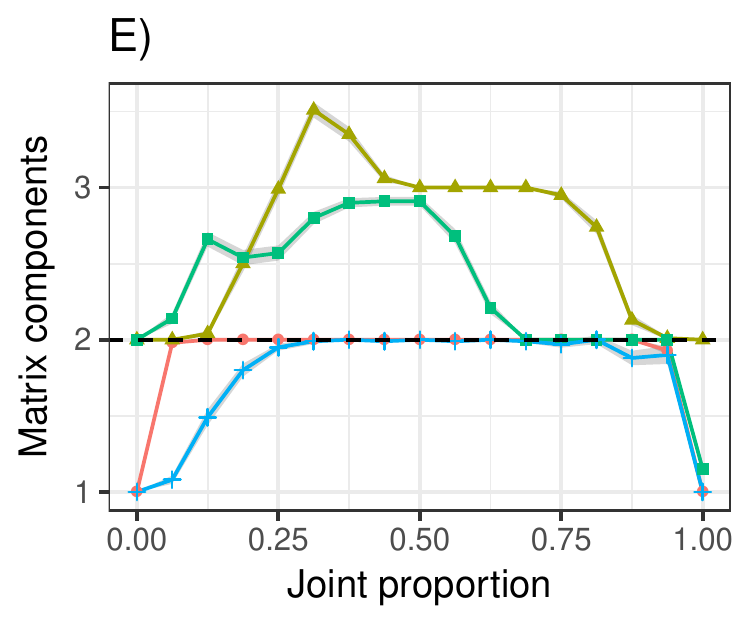}
    \includegraphics[scale=0.5]{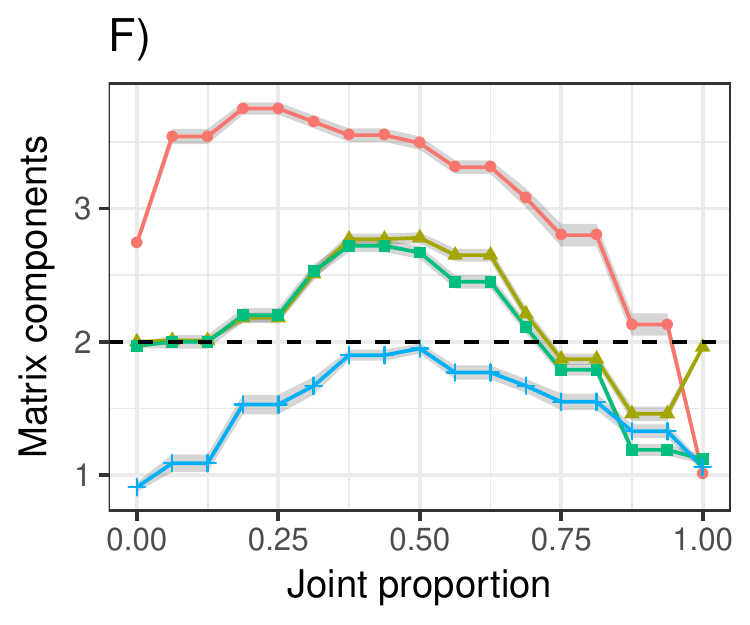}
  \caption{Comparison of MM-PCA, CMF and JIVE in a low dimensional case (A, C and E) and a high dimensional case (B, D and F). MM-PCA shows the most robust performance overall. In the low dimensional case CMF performs well, but it falls short in the high dimensional case. JIVE performs well only when the joint proportion is low. JIVE-OC (orthogonal constraint) performs robustly but poorly. A and B) Accuracy at finding the globally joint component. Ordinary PCA (performed on the concatenation of the three data matrices) is included as a baseline. C and D) Estimated number of globally joint components. E and F) Estimated rank of each matrix. 
  Dashed lines show the correct number of components.}
  \label{fig:ex2}
\end{figure}

\subsection{Simulation 3}
Simulation 3 demonstrates the ability of MM-PCA to estimate sparse loadings. We generate a matrix of size $30\times 30$. The matrix consists of two components and Gaussian noise. Each component is the product of two sparse loadings with ones at three random positions and zeros elsewhere. 

We compare the estimated components to true sparse loadings. MM-PCA is run with a maximum rank of two whereas CMF is given the correct rank (two). As in the other simulations, a range of threholds are applied to the CMF loadings to provide a comparison with MM-PCA. Accuracy is measured with Matthews correlation coefficient (MCC). 

Results from 100 simulation runs are shown in Figure \ref{fig:ex3} (upper panel). MM-PCA is able to identify the sparse loadings  with high accuracy for moderate to high SNR values. MM-PCA outperforms CMF in the entire SNR range, though this comparison is less meaningful than for Simulations 1 and 2 since CMF has not been developed to explicitly provide sparse loadings. 

We also investigate how MM-PCA rank-selection works in a sparse setting. Here, both MM-PCA and CMF, included for comparison, are run with maximum rank 3. In Figure \ref{fig:ex3} (lower panel), we see that MM-PCA performs quite well for moderate to high SNRs. In a few cases, MM-PCA results in rank zero solutions due to poor initialization and those results are excluded from the analysis.   In this setting, thresholding the CMF solutions does not result in a rank selection. This tendency for CMF to overfit was also seen in Simulation 2. 

\begin{figure}
  \centering
    \includegraphics[scale=0.5]{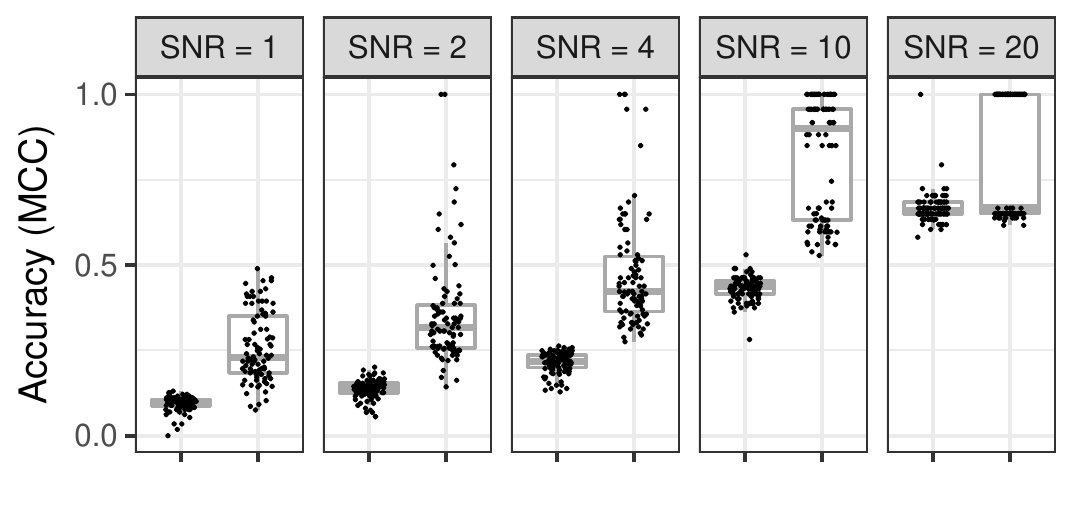}\\
    \vspace{-2.65mm}
    \hspace{-0.5mm}
    \includegraphics[scale=0.5]{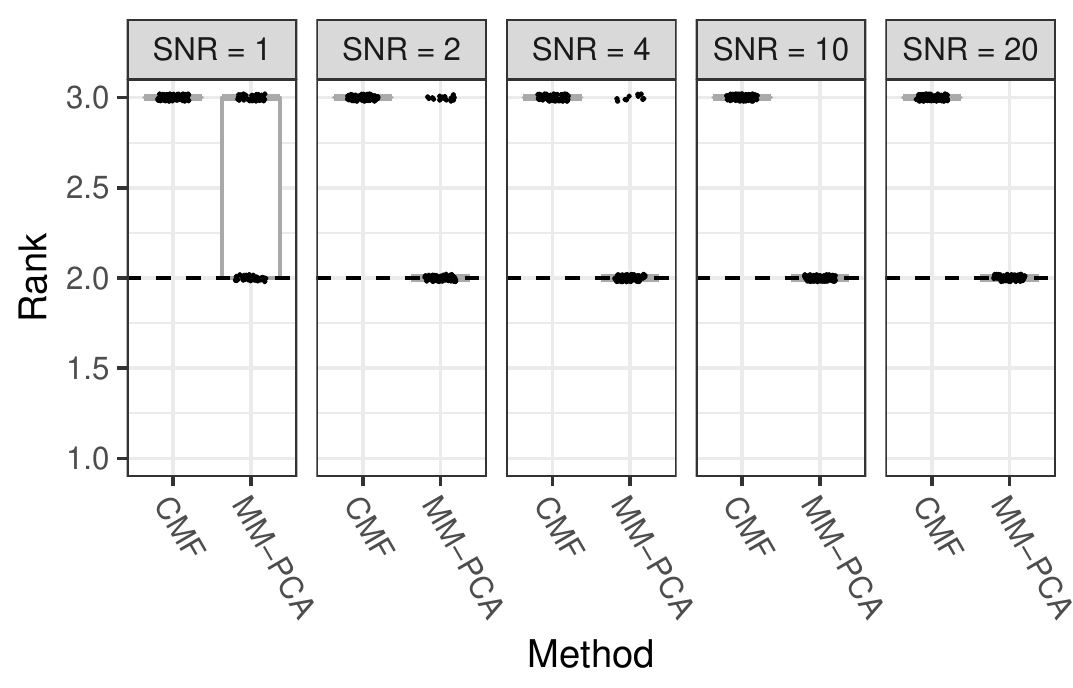}
  \caption{MM-PCA more accurately identifies the non-zero positions of the loadings and more often identifies the correct rank. MCC, Matthews correlation coefficient.  Dashed lines show the correct rank. SNR, signal-to-noise ratio.}
  \label{fig:ex3}
\end{figure}

\section{Integrative Analysis of Gene Expression and Methylation Data for Three Patient Cohorts}
\label{sec:bio}
We apply MM-PCA to a well established set of glioblastoma (GBM) data from
TCGA \citep{tcga_stuart_2013}. Earlier versions of TCGA data were used to identify four subtypes of GBM, termed proneural (PN), classical (CL), neural (NL) and mesenchymal (MS), by clustering of gene expression array profiles \citep{verhaak_integrated_2010}. Analysis of methylation data from TCGA subsequently identified a subset of patients within the PN subtype with favorable prognosis termed the G-CIMP (glioma-CpG island methylator phenotype) subgroup, more frequently observed among younger patients \citep{noushmehr_identification_2010}. Recent work has questioned the existence of the NL subtype, showing that a clustering using only tumor-intrinsic gene expression identifies three subtypes \citep{wang_tumor_2017}. 

Here, we analyze gene expression and methylation data centered around four canonical pathways of GBM; mTOR, Ras, MAPK and PI3K-Akt, focusing on processes critical for glioblastoma development and progression.

The data set consists of six data matrices: gene expressions for patient cohorts 1 and 2, methylation data for patient cohorts 2 and 3, and patient covariance matrices between cohorts 1 and 2 and between cohorts 2 and 3.
This results in five views: the three patient cohorts constitute three groups of observations, and the genes and methylation probe sites constitute two groups of variables.
Data preprocessing and normalization steps are described in detail in the supplement. After filtering on high variance genes and methylation sites, the data set to be modelled comprises three patient cohorts of sizes 96, 76 and 212, 297 genes and 305 DNA sites (Figure \ref{fig:bioex}A).

\subsection{MM-PCA Integrates Across Views and Cohorts}
\label{sec:expintegrative}

MM-PCA was applied with ten regularization levels, $\lambda_0$ between $e^{-8}$ and 1 on a logarithmic scale. The lowest cross-validation error was found for $\lambda_0\approx 0.012$. We allowed the solution to have a maximal rank of 40. The selected solution has an effective rank of 24. 
Figure \ref{fig:bioex}A shows the data as modeled by the MM-PCA solution. The six data matrix estimates shown are $\hat{X}_{ij}=V(\xi_i)D_iD_jV(\xi_j)^T, (i,j)\in\{(1,4),(2,4),(2,5),(3,5),(1,2),(2,3)\}$. View indices $1,\ldots,5$ label the five views in the order: cohort 1, cohort 2, cohort 3, genes, and methylation probes. 

In Figure \ref{fig:bioex}A matrix columns and rows are ordered according to the bi-clustering  given by the MM-PCA solution. 
The hierarchical nature of this clustering is shown with cluster trees for each view. The importance of each component for each view is shown in Figure \ref{fig:R2}C. It is apparent in Figure \ref{fig:bioex}A that clusters are pronounced in all matrices associated with each view, i.e.\ the clustering is integrative. One cluster is particularly pronounced. It can be seen as a bright yellow wide rectangular area in the lower right part of the cohort 3 methylation data. We found that the associated cluster of patients is strongly associated with low age and G-CIMP status (Section \ref{sec:expclinical}). 
MM-PCA thus detects the known connection between low age at diagnosis and G-CIMP status in an unsupervised fashion. Looking at the methylation data for cohort 2 and covariance between cohorts 2 and 3, it can also be seen that this patient cluster in cohort 3 is associated with a patient cluster in cohort 2 with strongly correlated methylation. The cluster in cohort 2 is small and its presence is thus boosted through the integration with the larger cohort 3. 

The partially shared structure, captured by $\mathscr{D}$ and $R^2$ (defined in Section \ref{sec:amvpca}), is shown in Figures \ref{fig:R2}A-D. Figure \ref{fig:R2}A shows the proportion of total variation of each matrix captured by each component. Figure \ref{fig:R2}B groups components according to which views. Thus the set of components with a given color in Figure \ref{fig:R2}B constitutes a subspace for variation in a subset of data matrices. Figure \ref{fig:R2}C shows $\mathscr{D}$ of the MM-PCA solution. Figure \ref{fig:R2}D shows the proportion of variation explained by each component, resulting in a scree-plot as commonly used in ordinary PCA.
Directed $R^2$ is shown in Figures \ref{fig:bioex}B-C. Methylation data explains variation in gene expression, in agreement with their biological cause-effect relationship.

Figure \ref{fig:R2}C provides the basis for interpreting the components illustrated in Figure \ref{fig:bioex}D-F.
Components 1, 2, 5 and 11 (Figure \ref{fig:bioex}D) capture structure that is shared globally. Components 3, 7 and 8 (Figure \ref{fig:bioex}E) capture structure that is primarily shared by 
cohorts 1 and 2 through gene expression, but also models variation in methylation data for cohort 2. Components 4, 6, 9 and 10 (Figure \ref{fig:bioex}F) capture structure that is primarily shared by cohort 2 and 3 through methylation, but also additional variation in gene expression for cohort 2. The contribution of the remaining components is small, mainly capturing variation in methylation in cohort 3

As with ordinary PCA, MM-PCA can be used to plot data in lower dimensions. In Figure \ref{fig:bioint}B we plot all observations by their first two loadings. Patients of different subtype and G-CIMP status are separated already in the first two components.
Ordinary PCA would not be able to use the entire data set, but would need to base the analysis on either cohorts 1 and 2 and gene expression or cohorts 2 and 3 and methylation data. From the MM-PCA plot one can infer a plausible subtype and G-CIMP status for the observations with unknown status. 

The MM-PCA solution can also be used for integrative bi-clustering of all observations and variables (Figure \ref{fig:bioint}A). This is in contrast to the bi-clustering in Figure \ref{fig:bioex}A which clustered the items of each view. We impute the two missing data matrices $\tilde{X}_{15}$ (cohort 1 methylation data) and $\tilde{X}_{34}$ (cohort 3 gene expressions) by multiplying the corresponding $V_\cdot$ and $D_\cdot$ matrices and construct the bigger matrix
$$
\hat{X}=\begin{bmatrix}
\hat{X}_{14} & \tilde{X}_{15} \\
\hat{X}_{24} & \hat{X}_{25} \\
\tilde{X}_{34} & \hat{X}_{35}
\end{bmatrix}=\begin{bmatrix}
V(\xi_1)D_1 \\
V(\xi_2)D_2 \\
V(\xi_3)D_3
\end{bmatrix}\begin{bmatrix}
V(\xi_4)D_4 \\
V(\xi_5)D_5
\end{bmatrix}^T.
$$
Similarly to the bi-clustering described above, we reorder columns and rows of $\hat{X}$ according to signs of elements in $[V(\xi_1)^T~V(\xi_2)^T~V(\xi_3)^T]^T$ and $[V(\xi_4)^T~V(\xi_5)^T]^T$ (positive, zero or negative). Cluster hierarchies are shown as trees. We cut the trees at two heights, using two and three components, resulting in $3^2$ and $3^3$ clusters, respectively.
The low age, G-CIMP bi-cluster considered above is again present, but now includes patients from all cohorts and some genes in addition to its methylation sites (dark blue patient cluster, dark green variable cluster).
A closer look at the relation between cluster and subtype shows that CL is predominant in the red rank 2 cluster, PN in the blue and MS in the green and purple clusters (Figure \ref{fig:bioint}C). The split of the rank 2 purple cluster into rank 3 light and dark purple clusters separates PN patients by G-CIMP status. The analysis captures several clinical variables (Section \ref{sec:expclinical}), and the clustering is thus not primarily focused on separation by subtype. GBM disease subtype has been shown to have little clinical relevance \citep{johansson_large_2018}.

\begin{figure}
  \centering
    \includegraphics[width=\textwidth]{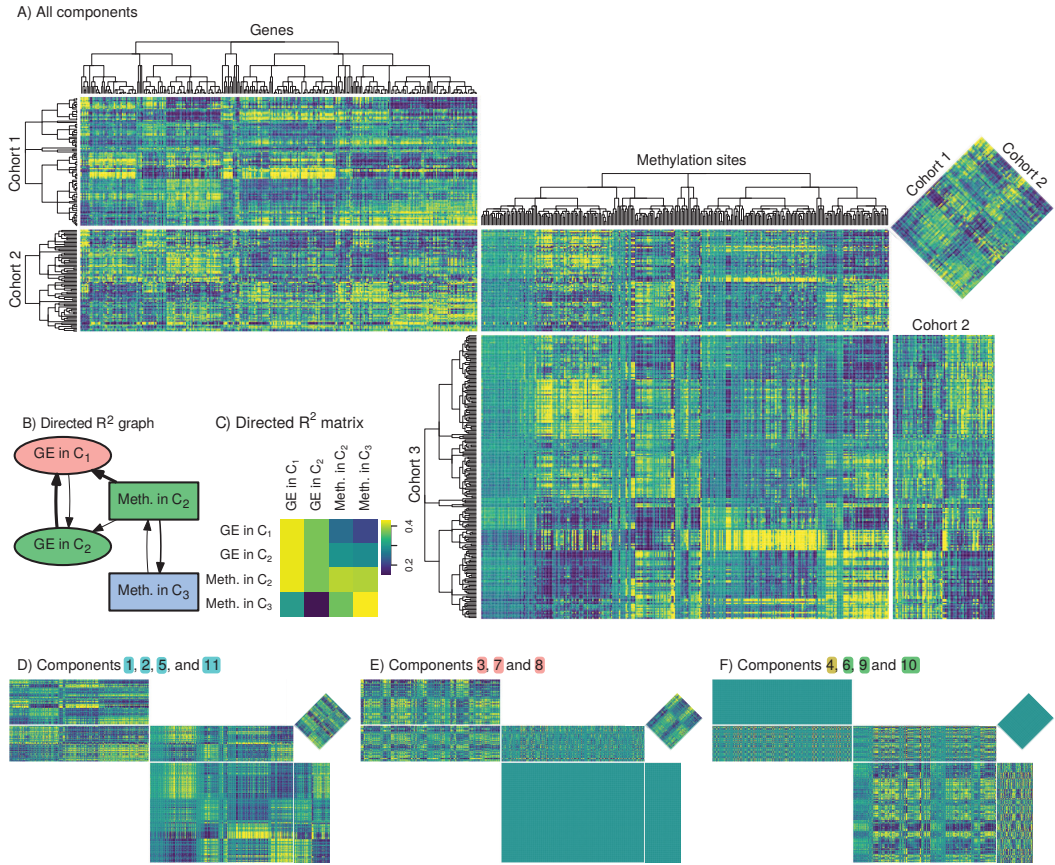}
  \caption{A) Heatmaps showing MM-PCA approximations of the glioblastoma data. Dendrograms show hierarchical integrative bi-clustering of patients and variables within each view.
  B) and C) Directed $R^2$ depicted as a graph (B) and as a matrix (C). B) A directed connection is drawn for each directed $R^2$ value above 0.35. The width of edges is proportional to the directed $R^2$ value. C) Heatmap showing all directed $R^2$ values. D) Heatmap of only globally joint components. Signal captured by these components is present in all data matrices. E) Heatmap of components shared by cohorts 1 and 2. 
 F) Heatmap of components shared by cohorts 2 and 3.
Colors in headings D-F refer to Figure \ref{fig:R2}B. Heatmap colors in A) and D-F): yellow: positive, turquoise: zero, blue: negative.}
  \label{fig:bioex}
\end{figure}

\subsection{Leading MM-PCA Components Capture Subtype, G-CIMP and Clinical Features of GBM}
\label{sec:expclinical}
To elucidate the potential biological and clinical relevance of the MM-PCA components we quantify for each: the proportion of variation in each matrix and view explained by each component (Figure \ref{fig:R2}A-C), the proportion of total variance explained by each component (Figure \ref{fig:R2}D), the relationship between each component and clinical parameters subtype, G-CIMP status, sex, age, survival time, tissue type, stemness and pathway enrichment (Figure \ref{fig:R2}E-F, Table \ref{tab:bioex}, Figures \ref{fig:compfirst}-\ref{fig:complast}). P-values are calculated using standard methods: Kruskal-Wallis test for factor variables, Fisher z-transformation for continuous variables and Cox proportional hazard model for survival times. Tissue type indicates whether a sample is from a primary tumor, recurrent tumor or from normal tissue. We use a machine learning based measure of stemness to quantify each sample's degree of oncogenic dedifferentiation \citep{malta_machine_2018}. Pathway enrichment is assessed using gene set enrichment analysis \citep[GSEA,][]{subramanian_gene_2005} using v3.0 of java\-GSEA obtained from the Broad institute. Hallmark pathway annotations where obtained from MSigDB v6.2 \citep{liberzon_molecular_2015}. 

Several components found with MM-PCA correlate significantly with clinical parameters (Figures \ref{fig:compfirst}-\ref{fig:complast}).
Component 1 separates patients with MS subtype from patients with PN subtype.
The component is also associated with genes defining the epithelial-mesenchymal transition, a hallmark pathway consistent with the gradient between the PN and MS subtype. 
Component 2 strongly separates G-CIMP patients from non G-CIMP patients as well as patients with CL subtype from other patients. 
Component 3 captures variation in the expression data that associates with several hallmark pathways including oxidative phosphorylation and mitotic spindle. The component is associated with tissue type and correlates with the stemness signature.
Component 4 captures variation in the methylation data and is associated with subtype and G-CIMP status.
Interestingly, the component associates some CL subtype patients with G-CIMP patients, suggesting a set of genes and probes where these CL patients are similar to G-CIMP patients.
Component 5 captures variation in both data types and describes variation between normal and tumorous samples. Of note, a number of NL samples have high loadings for this component, possibly indicating contamination of normal cells, which has been suggested as the basis for the NL subtype \citep{wang_tumor_2017}.
Component 6 captures patient sex, primarily by a small number of methylation probes on chromosome X (e.g.\ IRAK1 and FLNA).

\begin{figure}
  \centering
    \includegraphics{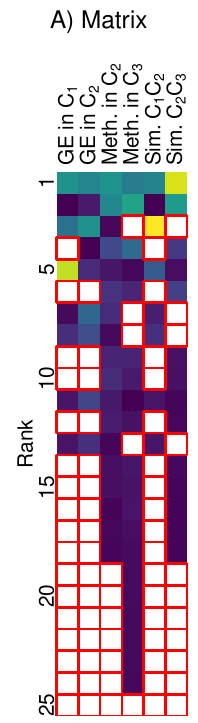}
    \includegraphics{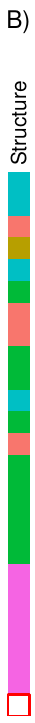}
    \includegraphics{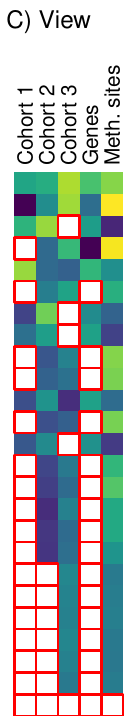}
    \hspace{-2mm}
    \includegraphics{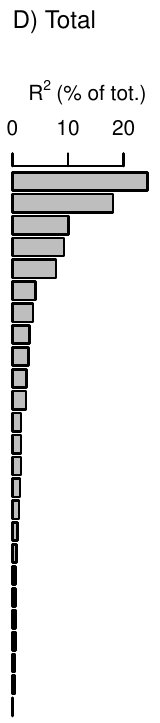}\\
    \includegraphics{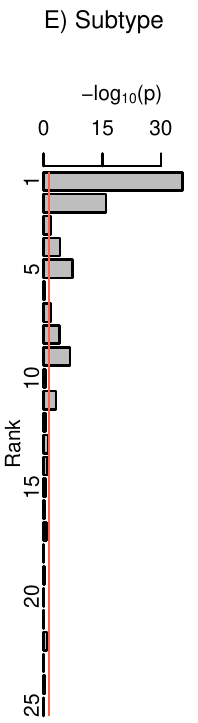}
    \hspace{-2mm}
    \includegraphics{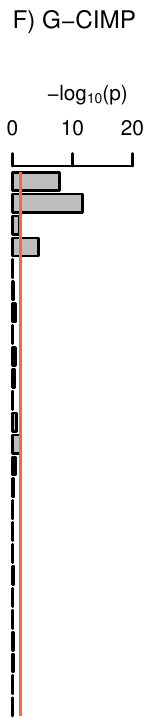}
    \hspace{-3mm}
    \includegraphics{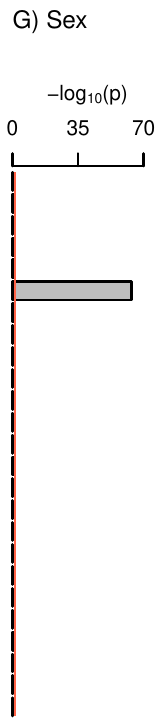}
    \hspace{-2mm}
    \includegraphics{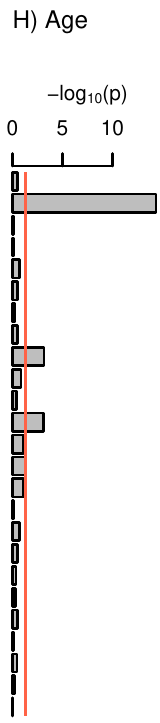}
    \includegraphics{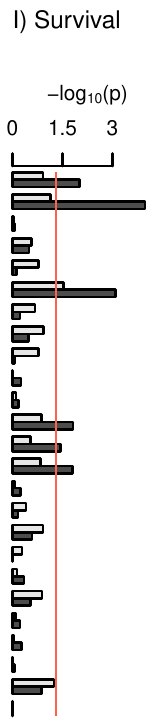}
  \caption{Joint structure estimated by MM-PCA for the glioblastoma data. A) Matrix-wise $R^2$ values capturing the proportion of variation in each matrix explained by each component. High values in bright yellow, low values in dark blue. White boxes with red borders mark exact zeros. B) Color coding of each component by the combination of views (or, equivalently, matrices) that it is active in. Green components, for example, make up a subspace that capture variation related to methylation data. C) $\mathscr{D}$, the augmented D matrix, that is column $i$ of $\mathscr{D}$ is equal to the diagonal of $D_i$. Colors as in A. D) Relative importance of each component for the approximation of the entire data set. E-I) P-values from tests for association between component loadings and clinical parameters. Red lines show significance level 0.05. I) Dark bars show association with survival. Bright bars show association for survival after correction for patient age.}
  \label{fig:R2}
\end{figure}

\begin{table}
\centering
\caption{Significance tests for association of MM-PCA component loadings with clinical parameters, stemness and gene signatures of hallmark pathways relevant to glioblastoma. Legend: $*\!*\!*$: $p<0.001$, $**$: $p<0.01$, $*$: $p<0.05$, $\cdot$ : $p<0.1$., Survival is not significantly associated, at the 0.05 level, with any component after correction for age or sex.}
\begin{tabular}{rrcccccc}
& & \multicolumn{6}{c}{MM-PCA component} \\
& & 1 & 2 & 3 & 4 & 5 & 6 \\
\cline{2-8}
Clinical: & Subtype & $*\!\!*\!\!*$ & $*\!\!*\!\!*$ & $*$ & $*\!\!*\!\!*$ & $*\!\!*\!\!*$ &  \\
& G-CIMP & $*\!\!*\!\!*$ & $*\!\!*\!\!*$ & $\cdot$ & $*\!\!*\!\!*$ &  &  \\
& Sex &  &  &  &  & $\cdot$ & $*\!\!*\!\!*$ \\
& Age &  & $*\!\!*\!\!*$ &  &  &  &  \\
& Survival & $**$ & $*\!\!*\!\!*$ &  &  &  & $*\!\!*\!\!*$ \\
& Cohort &  & $**$ & $**$ &  & $*\!\!*\!\!*$ & $**$ \\
& Tissue type & $\cdot$ &  & $*\!\!*\!\!*$ &  & $*\!\!*\!\!*$ &  \\
\cline{2-8}
Stemness: & mRNA & $*\!\!*\!\!*$ &  & $*$ &  & $**$ &  \\
& Methylation & $*\!\!*\!\!*$ &  &  &  &  &  \\
\cline{2-8}
Hallmark pathways: & Epith. mesench. trans. & $**$ &  &  &  &  &  \\
& Oxidative phosph. &  &  & $*$ &  &  &  \\
& Mitotic spindle &  &  & $**$ &  &  &  \\
& KRAS signaling DN & $*$ &  &  &  &  &  \\
\end{tabular}
\label{tab:bioex}
\end{table}

\begin{figure}
  \centering
    \includegraphics[width=\textwidth]{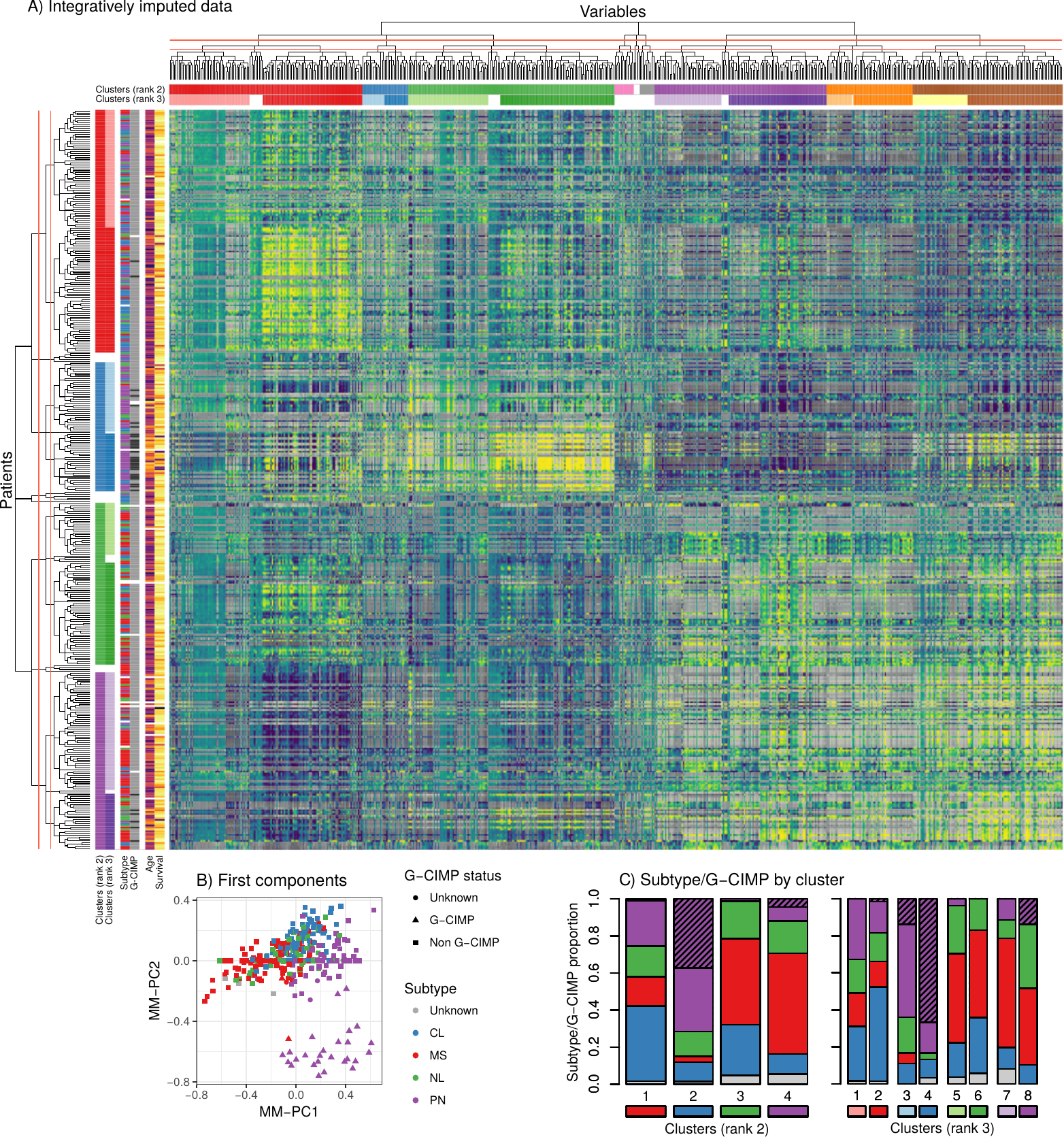}
  \caption{Integrative imputation and bi-clustering. A) Heatmap of the MM-PCA approximation of the glioblastoma data and imputation of gene expression and methylation for the cohorts that miss each respective data type. Heatmap colors: yellow: positive, turquoise: zero, blue: negative. Imputed data is gray, light: positive, dark: negative. Dendrograms show hierarchical clustering of patients and variables. Red lines across dendrograms show tree cuts into clusters. Color bars show cluster assignment and clinical parameters of patients. B) MM-PCA plot of all patients by values in first two components. C) Proportion of disease subtype in each patient cluster. Colors in bars map to subtypes in B). Colors below bars map to patient clusters in A).}
  \label{fig:bioint}
\end{figure}

\section{Discussion}
\label{sec:disc}
We present a method for multi-group and multi-view data integration, MM-PCA, capable of identifying partially shared components between any subset of data matrices. The goal of MM-PCA is to provide a general data integration pipeline that is applicable to heterogeneous data where global integration would be too restrictive. 

MM-PCA provides a sparse, low-rank representation of multiple data matrices. The MM-PCA solution is easily interpreted since; (i) each component is identified as associated with a subset of data matrices; (ii) sparse component loadings directly translates the MM-PCA solution to an integrative bi-clustering procedure; and (iii) MM-PCA automatically selects the rank of the approximation as part of the procedure. 

We show through simulations that MM-PCA performs well in a range of settings including low- and high-dimensional tasks, small and large proportion of joint variation, and low to high signal-to-noise ratio. MM-PCA exhibits a robust and stable performance across this range, whereas competitors CMF and JIVE perform poorly in some of these settings. This provides support for using MM-PCA in real data settings where the underlying structure is unknown. We apply MM-PCA to an 'omic data set of glioblastoma to illustrate the power of the method as an explorative tool for complex data integration. 

MM-PCA constitutes a complex optimization task. The current implementation takes 20 minutes to analyze the data treated in Section \ref{sec:bio} on a 4 core 3.5 GHz desktop computer. On simulated data, runtimes are on par with CMF (time per iteration is somewhat lower for MM-PCA) despite MM-PCA adressing a more complex problem.  Future work could be directed at providing approximate solutions or optimization strategies to improve runtimes.

To further extend the applicability of MM-PCA, the method should be generalized to other loss functions (e.g.\ for binary data) and tensor data. This is left for future work. 

\section{Software}
\label{software}
MM-PCA is available as an R package on CRAN.

\section{Supplementary Material}
\label{supplementary}

Supplementary material is available in the appendix.

\section*{Acknowledgments}
This work was supported by grants from Vetenskapsrådet (VR 2013-05101) and the Swedish Foundation for Strategic Research (SSF BD15-0088).

{\it Conflict of Interest}: None declared.

\bibliographystyle{biorefs}

\bibliography{report}

\markboth%
{J. Kallus and others}
{MM-PCA}

\appendix
\section{Literature Review}
This literature review is summarized in Table~\ref{tab:review} and the different data integration problems are illustrated in Figure~1. The simplest case is the integration of data from multiple cohorts observed across a common set of features, commonly referred to as multi-group integration (Figure~1A). An early extension of PCA for multi-group data is CPCA \citep{flury_common_1984}. It summarizes a data set consisting of several cohorts with one matrix of loadings that is common for all cohorts. The multi-view (multi-source, multi-modal) data integration problem focuses instead on one cohort where multiple sets of features or measurements are available (Figure~1B). The popular iCluster algorithm \citep{shen_integrative_2009} is an example of multi-view integration for genomic data. Below we review methods for multi-group or multi-view data integration together since they are algorithmically similar.

Several classical methods for multi-view (or multi-group) integrative data analysis are based on PCA or SVD. Canonical correlation analysis (CCA) \citep{hotelling_relations_1936} and partial least squares (PLS) \citep{wold_causal_1974} are perhaps the most used and studied. They can only be applied to data sets comprising two matrices, and they focus only on finding joint structure or parameters to explain both matrices together. Hierarchical PCA, hierarchical PLS \citep{wold_hierarchical_1996}, generalized SVD \citep{van_loan_generalizing_1976}  and higher order generalized SVD \citep{ponnapalli_higher-order_2011} are extensions to more than two data matrices but still limited to finding a joint structure shared by all matrices in a dataset. 

JIVE \citep{lock_joint_2013}, DISCO-SCA \citep{schouteden_sca_2013} and O2-PLS \citep{trygg_o2-pls_2002} are methods that models data as a sum of both globally joint signal and signal that is individual to each matrix. They estimate model parameters by iteratively alternating between fitting a global model for the joint structure and fitting individual models to the residuals in each matrix. O2-PLS applies only to datasets of two matrices but OnPLS \citep{lofstedt_onpls-novel_2011} generalizes it to an arbitrary matrix count. AJIVE \citep{feng_angle-based_2018} uses the same model as JIVE but bases rank estimation on random matrix theory instead of bootstrapping. GIPCA \citep{zhu_generalized_2018} generalizes JIVE to allow for different data types and also allows for the presence of missing values. 

The method sMVMF \citep{wang_sparse_2015} provides a decomposition of multi-group data into global and individual signal while imposing sparsity penalties on the global and individual loadings.

Simultaneous data integration for multi-group \emph{and} multi-view data (Figure~1C) was first addressed by bi-modal OnPLS  \citep{lofstedt_bi-modal_2012}, an extension of OnPLS. The method can decompose such datasets into global and individual signal. Linked matrix factorization \citep{oconnell_linked_2017} later extended JIVE to the same generality, with the additional ability to handle elementwise missing data. BIDIFAC \citep{park_integrative_2019} allow signal to be group-global and view global, in addition to the global and individual signal. Integrative analysis of data that is both multi-view and multi-group will become increasingly relevant in molecular biology \citep{richardson_statistical_2016}.

MOFA \citep{argelaguet_multiomics_2018} is the only method for multi-view data that can find joint signal shared only for a \emph{subset} of matrices. Like sMVMF, MOFA also performs feature selection. It does so by using a sparsity inducing prior.

CMF \citep{klami_group-sparse_2014} builds on the augmented multi-view data (AMD, Figure~1D) framework to perform multi-group and multi-view integration with loadings that are shared by a subset of matrices. The partial sharing problem for multi-view integration was, as mentioned above, addressed in MOFA. Both of these methods utilize a Bayesian setup. CMF uses the model $X_{ij}=U_iU_j^T+\varepsilon_{ij}$ where each $U$ and $\varepsilon$ follow a Gaussian distribution. The model parameters are estimated using variational Bayesian inference.

\begin{sidewaystable}
\centering
\caption{Summary of reviewed methods for integrative analysis of data in matrix form with a focus on recent methods. Methods are compared in terms of the types of datasets they can handle, how their analysis decomposes the data and if interpretation of the analysis is facilitated by estimating sparse loadings. In the column Signal decomposition, "Any subset" means that joint signal for any subset of matrices can be found. This includes global and individual signals. *CMF and MOFA do not produce a hard decision for the subset signal decomposition, thus reducing their interpretable value.}
\begin{tabular}{cccccc}
  \multirow{2}{*}{Method}
    & \multicolumn{3}{c}{Data support}  & \multirow{2}{*}{\thead{Signal \\ decomposition}}
    & \multirow{2}{*}{\thead{Sparse \\ loadings}} \\ \cline{2-4}
  & Setup & \thead{Matrix \\ count} & \thead{Missing \\ data} \\  \hline
     CCA,       Hotelling, 1936  &       Multi-view         &     2       &   No   &          Global         &    No      \\
     PLS,        H. Wold, 1974   &       Multi-view         &     2       &   No   &          Global         &    No      \\
    GSVD,        Van Loan, 1976  &       Multi-view         &     2       &   No   &          Global         &    No      \\
    CPCA,         Flury, 1984    &      Multi-group        & Arbit.   &   No   &          Global         &    No      \\
   HOGSVD,      Ponnapalli, 2011 &       Multi-view         & Arbit.   &   No   &          Global         &    No      \\
  Hier. PCA,     S. Wold, 1996   &       Multi-view         & Arbit.   &   Yes  &          Global         &    No      \\
  Hier. PLS,     S. Wold, 1996   &       Multi-view         & Arbit.   &   Yes  &          Global         &    No      \\
   O2-PLS,        Trygg, 2002    &       Multi-view         &     2       &   No   &   Glob. \& ind.  &    No      \\
  DISCO-SCA,    Schouteden, 2013 & Multi-group or -view  &     2       &   No   &   Glob. \& ind.  &    No      \\
    OnPLS,       L\"ofstedt, 2011  &       Multi-view         & Arbit.   &   No   &   Glob. \& ind.  &    No      \\
    JIVE,          Lock, 2013    &       Multi-view         & Arbit.   &   No   &   Glob. \& ind.  &    No      \\
    AJIVE,         Feng, 2018    &       Multi-view         & Arbit.   &   No   &   Glob. \& ind.  &    No      \\
    sMVMF,         Wang, 2015    &      Multi-group        & Arbit.   &   No   &   Glob. \& ind.  &   Yes      \\
    GIPCA,         Zhu, 2018     &       Multi-view         & Arbit.   &   Yes  &   Glob. \& ind.  &    No      \\
    MOFA,       Argelaguet, 2018 &       Multi-view         & Arbit.   &   Yes  &       Any subset*       &   Yes      \\
     LMF,        O'Connel, 2017  & Multi-group and -view  &     3       &   Yes  &   Glob. \& ind.  &    No      \\
Bi-mod. OnPLS,   L\"ofstedt, 2012  & Multi-group and -view  & Arbit.   &   No   &   Glob. \& ind.  &    No      \\
BIDIFAC,   Park, 2019  & Multi-group and -view  & Arbit.   &   Yes   &   Glob. \& ind.  &    No      \\
     CMF,         Klami, 2014    &  Augmented multi-view   & Arbit.   &   Yes  &       Any subset*       &    No      \\
   MM-PCA,        This article   &  Augmented multi-view   & Arbit.   &   Yes  &        Any subset       &   Yes
\end{tabular}
\label{tab:review}
\end{sidewaystable}

\section{Generalized Euler Parametrization}
\label{sec:geneuler2}

We present the following corollary to the results of \citep{hoffman_generalization_1972}. It states that $k$-frames can be factorized into $pk-k(k+1)/2$ Givens rotation matrices.

\begin{corollary}
\label{geneuler}
For $p\geq k$, an arbitrary $p$-dimensional $k$-frame $V$ can be expressed as $R_1R_2\cdots R_mI_{pk}$, where $m=pk-k(k+1)/2$, $R_i$ are generalized Euler rotations and $I_{pk}$ are the first $k$ columns of $I_p$.
\end{corollary}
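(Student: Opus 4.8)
The plan is to prove the factorization by induction on $k$, peeling off one column of the $k$-frame at a time. The base case $k=1$ is essentially the statement that any unit vector in $\mathbb{R}^p$ is reachable from $e_1$ by a product of $p-1$ Givens rotations, which is the standard ``spherical coordinates'' construction: first I would rotate in the $(p-1,p)$-plane to zero out the last coordinate, then in the $(p-2,p-1)$-plane, and so on up to the $(1,2)$-plane, after which the image of $e_1$ is the desired vector; counting gives $m = p - 1 = p\cdot 1 - 1(1+1)/2$ rotations, matching the formula. This is the construction underlying the inverse map $V^{-1}(\cdot)$ referenced in the excerpt, so I would invoke it directly.

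For the inductive step, suppose the claim holds for $(k-1)$-frames in every dimension $\ge k-1$. Given a $p$-dimensional $k$-frame $V = [v_1 \mid \cdots \mid v_k]$, I would first use the base-case construction to build a product $Q$ of $p-1$ Givens rotations (acting in planes $(i,j)$ with $j > i$, say sweeping down from $(p-1,p)$ to $(1,2)$) such that $Q e_1 = v_1$, i.e.\ $Q^T v_1 = e_1$. Then $Q^T V$ has first column $e_1$, and since the columns of $V$ are orthonormal, columns $2,\ldots,k$ of $Q^T V$ all have vanishing first coordinate; hence they form a $(k-1)$-frame living in the orthogonal complement $\mathrm{span}(e_2,\ldots,e_p) \cong \mathbb{R}^{p-1}$. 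By the induction hypothesis this $(k-1)$-frame can be written as a product of $(p-1)(k-1) - (k-1)k/2$ Givens rotations in the coordinates $2,\ldots,p$; padding these with a leading $1$ in the first coordinate makes them Givens rotations in $\mathbb{R}^p$ fixing $e_1$. Multiplying these together with the leading $e_1$-column produces $Q^T V$, and hence $V = Q \cdot (\text{that product}) \cdot I_{pk}$. The total count is $(p-1) + (p-1)(k-1) - (k-1)k/2 = pk - k - (k-1)k/2 = pk - k(k+1)/2 = m$, as required, and one checks that the planes used in the two stages are exactly those indexed by $1 \le i < j \le k$ nested appropriately, matching the double-product form $V(\xi_\cdot) = \bigl(\prod_{i=1}^p \prod_{j>i}^k R_{ij}(\xi_{\cdot,ij})\bigr) I_{pk}$ stated earlier.

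The main obstacle is bookkeeping rather than conceptual: I must verify that after conjugating by $Q^T$ the residual $(k-1)$-frame really does sit in the coordinate subspace spanned by $e_2,\ldots,e_p$ (this is where orthonormality of the columns is essential — without it the first coordinates of $v_2,\ldots,v_k$ need not vanish), and that reindexing the inductively supplied rotations into ``global'' Givens matrices on $\mathbb{R}^p$ lands them in planes $(i,j)$ with $j \le k$, consistent with the claimed product range. Since the statement is labelled a corollary to \citep{hoffman_generalization_1972}, an alternative — and perhaps the route actually taken in the supplement — is simply to specialize the general $SO(n)$ Euler-angle factorization of that reference to the case of the first $k$ columns, discarding the rotations that only mix coordinates $> k$; the parameter count $pk - k(k+1)/2$ then drops out as $\dim SO(p) - \dim SO(p-k)$. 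I would present the self-contained inductive argument as the primary proof and note the specialization as a remark.
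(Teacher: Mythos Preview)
Your inductive argument is correct and establishes the corollary, but it is genuinely different from the paper's proof. The paper takes exactly the route you flag as ``an alternative --- and perhaps the route actually taken in the supplement'': it completes $V$ to a full rotation $W\in SO(p)$ with $V=WI_{pk}$, invokes the Hoffman--Raffenetti--Ruedenberg factorization $W=\prod_{1\le i<j\le p}R_{ij}$, splits the index set into $\mathcal{A}=\{(i,j):i\le k\}$ and its complement $\mathcal{A}^C=\{(i,j):i>k\}$, and observes that every factor in $\mathcal{A}^C$ fixes $I_{pk}$ because both of its rotation indices exceed $k$. The count $|\mathcal{A}|=\sum_{i=1}^k(p-i)=pk-k(k+1)/2$ then drops out immediately. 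So your ``remark'' is the paper's whole proof.

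What each approach buys: the paper's argument is shorter and makes the dependence on \cite{hoffman_generalization_1972} explicit (appropriate, since the statement is billed as a corollary to that paper), and it automatically identifies the specific index set $\mathcal{A}$ of planes that appears later in the gradient formulas. Your inductive peeling-off argument is more self-contained --- it reproves the relevant piece of Hoffman et al.\ rather than citing it --- and makes the geometric content (rotate the first column to $e_1$, recurse on the orthogonal complement) transparent. One small slip to fix if you write it up: the parenthetical claim that the planes used land in ``$(i,j)$ with $j\le k$'' is wrong (that set has only $\binom{k}{2}$ elements, not $m$); with your sweep $(p-1,p),\ldots,(1,2)$ for $Q$ the planes are not the same as the paper's $\mathcal{A}$, though of course the corollary as stated does not require any particular index set.
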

\begin{proof}
Let $V$ be an arbitrary $k$-frame and let $W$ be any rotation such that $V=WI_{pk}$. Such $W$ exist by the definition of $k$-frames. \cite{hoffman_generalization_1972} gives $W=\prod_{1\leq i<j\leq p}R_{ij}$ which can be factorized into $(\prod_{(i,j)\in \mathcal{A}}R_{ij})\allowbreak(\prod_{(i,j)\in \mathcal{A}^C}R_{ij})$ where $\mathcal{A}=\{(i,j):1\leq i<j\leq p\land i\leq k\}$ and $\mathcal{A}^C=\{(i,j):k<i<j\leq p\}$ since the factors of $W$ are ordered so that this can be done without reordering them. Furthermore $(\prod_{(i,j)\in \mathcal{A}^C}R_{ij})I_{pk}=I_{pk}$ since all indices in $\mathcal{A}^C$ are greater than $k$. Thus $V=(\prod_{(i,j)\in \mathcal{A}}R_{ij})I_{pk}$. Lastly $|\mathcal{A}|=m$.
\end{proof}

\subsection{Inverse}
In order to find a good initial value for the optimization problem we need to find the angles that correspond to a given $k$-frame $V^*$, i.e. $\xi=V^{-1}(V^*)$. \cite{hoffman_generalization_1972} states this inverse for rotations. To utilize the k-frames setup for MM-PCA, we here generalize the rotation inverse to $k$-frames in the following way. Let $U^{-1}(U)$ be the inverse for rotations given in \citep{hoffman_generalization_1972}, i.e.\ the function that gives the angles that correspond to a given rotation matrix $U$. Using the orthogonal complement $V^\perp$ of $V^*$ and that $[V^*\quad V^\perp]$ is a rotation matrix we set $V^{-1}$ to be the first $k$ columns of $U^{-1}([V^*\quad V^\perp])$. One step remains. Let $V^{\perp -}$ be equal to $V^{\perp}$ except for the last column, for which the signs are changed. Then $V^{\perp -}$ is also an orthogonal complement to $V^*$. Thus, let $V^{-1^-}$ be the first $k$ columns of $U^{-1}([V^*\quad V^{\perp -}])$. Either $V^*=V(V^{-1}(V^*))$ or $V^*=V(V^{-1^-}(V^*))$. The candidate for which the equality holds is used to invert $V^*$.

\section{Gradients of the Objective Function}
\label{sec:obj2}

Let $L(D,\xi_1,\ldots,\xi_n)$ be the objective function given in the article. The gradient with respect to element $(a,b), a > b$ of $\xi_i$ is
\begin{align*}
\frac{\partial L}{\partial (\xi_i)_{ab}}=-2\sum_{j:(i,j)\in\mathcal{S}}\text{tr}(A_{iab}^T(M_{ij}\odot(X_{ij}-V(\xi_i)D_iD_jV(\xi_j)^T))V(\xi_j)D_iD_j\cdot\\
\cdot B_{iab}^T(\frac{\partial R_{iab}}{\partial (\xi_i)_{ab}})^T)-2\sum_{j:(j,i)\in\mathcal{S}}\text{tr}(A_{iab}^T(M_{ij}^T\odot(X_{ij}^T-V(\xi_i)D_iD_jV(\xi_j)^T))V(\xi_j)\cdot\\
\cdot D_iD_jB_{iab}^T(\frac{\partial R_{iab}}{\partial (\xi_i)_{ab}})^T)+\lambda_3n_v^{-1}\text{tr}(A_{iab}^T\text{sign}(V(\xi_i)D_i)D_iB_{iab}^T(\frac{\partial R_{iab}}{\partial (\xi_i)_{ab}})^T)+\\
+\lambda_4n_v^{-1}\sum_{j=1}^{p_i}\text{tr}(||L_{ij}V(\xi_i)D_i||_2^{-1}A_{iab}^TL_{ij}^TL_{ij}V(\xi_i)D_iD_iB_{iab}^T(\frac{\partial R_{iab}}{\partial (\xi_i)_{ab}})^T)
\end{align*}
where $A_{iab},R_{iab}((\xi_i)_{ab})$ and $B_{iab}$ are defined by
\begin{align*}
A_{iab}R_{iab}((\xi_i)_{ab})B_{iab}=R_1R_2\cdots R_mI_{p_ik}=V(\xi_i)
\end{align*}
(see corollary \ref{geneuler}), $L_{ij}$ is a row vector with a one at position $j$ and zeros elsewhere and
\begin{align*}
\frac{\partial R_{iab}}{\partial (\xi_i)_{ab}}=\begin{bmatrix}0_{a-1} & 0 & 0 & 0 & 0\\
0 & -\sin((\xi_i)_{ab}) & 0 & -\cos((\xi_i)_{ab}) & 0\\
0 & 0 & 0_{b-a-1} & 0 & 0\\
0 & \cos((\xi_i)_{ab}) & 0 & -\sin((\xi_i)_{ab}) & 0\\
0 & 0 & 0 & 0 & 0_{p_i-b}\end{bmatrix}.
\end{align*}
If $a\leq b$ then $\partial L/\partial(\xi_i)_{ab}=0$. The gradient with respect to $D_{\cdot i}$ is
\begin{align*}
\frac{\partial L}{\partial D_i}=I(-2\sum_{j:(i,j)\in\mathcal{S}}V(\xi_i)^T(M_{ij}\odot(X_{ij}-V(\xi_i)D_iD_jV(\xi_j)^T))V(\xi_j)D_j-\\
-2\sum_{j:(j,i)\in\mathcal{S}}V(\xi_i)^T(M_{ij}^T\odot(X_{ij}^T-V(\xi_i)D_iD_jV(\xi_j)^T))V(\xi_j)D_j)+\\
+\lambda_1\text{sign}(D_{\cdot i})+\lambda_2 D_{\cdot i}\oslash\sqrt{(\mathscr{D}\odot\mathscr{D})\mathbf{1}}+\lambda_3n_v^{-1}V(\xi_i)^T\text{sign}(V(\xi_i)D_i)+\\
+\lambda_4n_v^{-1}\sum_{j=1}^{p_i}||(V(\xi_i)D_i)_{j\cdot}||_2^{-1}(V(\xi_i)_{j\cdot}\odot(V(\xi_i)D_i)_{j\cdot})
\end{align*}
where $\oslash$ is elementwise division and $\mathbf{1}$ is a column vector with each element being 1.

\section{Optimization, Algorithmic Details}
\label{sec:opt2}

The most computationally demanding task in MM-PCA is to compute the gradients of the objective with regard to the matrices $\xi_i$. This task has been parallelized across views and uses $n_v$ cores simultaneously.

To improve the convergence rate, the $\ell_1$ penalties are approximated by a smooth function based on tangens hyperbolicus as in \citep{trendafilov_projected_2006}.

The matrices $\xi_i$ contain angles in radians, so they are on the scale $[-\pi, \pi]$. When selecting tolerance and stopping criterion for BFGS it is convenient if all variables are on the same scale. Therefore the data is rescaled so that the biggest singular value, among all the singular values of all data matrices, is $\pi^2$. This results in the variables in $D$ being on approximately the same scale as $\xi_i$.

The penalty parameters $\lambda$ are rescaled to remove their dependency on the magnitude and size of the data matrices. In the objective function, each $\lambda_i$ is multiplied with a factor $c^{3/2}$, where $c=|\mathcal{S}|^{-1}\sum_{(i,j)\in\mathcal{S}}||X_{ij}||_F$, the mean Frobenius norm of the data matrices. The following derives this factor. Scaling all data by some constant $c$ should lead to scaling all matrix approximations by the same factor: $cV_iD_iD_jV_j^T=V_i(\sqrt{c}D_i)(\sqrt{c}D_j)V_j^T$. Thus, if the sum of all penalty terms before scaling the data was $\lambda^Tf(D_\cdot)$, then after scaling, the sum becomes $\lambda^Tf(\sqrt{c}D_\cdot)=\sqrt{c}\lambda^Tf(D_\cdot)$. The last equality is due to penalties being linear in the scale of $D_\cdot$. Multiplying the data and approximations by $c$ gives the objective $\sum_{(i,j)\in\mathcal{S}}||cX_{ij}-cV_iD_iD_jV_j^T||_F^2+\lambda^Tf(\sqrt{c}D_\cdot)=c^2\sum_{(i,j)\in\mathcal{S}}||X_{ij}-V_iD_iD_jV_j^T||_F^2+\sqrt{c}\lambda^Tf(D_\cdot)$ which has the same minimizing argument as $\sum_{(i,j)\in\mathcal{S}}||X_{ij}-V_iD_iD_jV_j^T||_F^2+c^{-3/2}\lambda^Tf(D_\cdot)$. Thus, setting $\lambda$ to $c^{3/2}\lambda$ removes the dependency of $\lambda$ on the scale of the data. This results in the $\lambda$ parameters being on the same scale regardless of the scale and size of the data.

\subsection{Initialization}
As initial values in optimization, we use an approximation of the solution under the assumptions that all components are globally joint. 

We use SVD to set $V_i$ to the loadings of the concatenated matrix consisting of all matrices (some transposed) with a view $i$. We find singular values $\Lambda_{ij}=V^T_iX_{ij}V_j$ for each matrix. We factorize $\Lambda_{ij}$ into $D_iD_j$ by greedily choosing signs for each diagonal element in each $D_\cdot$ and then finding the magnitude of the elements with continuous optimization. Finally, the initial values of $\xi_\cdot$ are found using the inverse $k$-frame operation, $V^{-1}$, defined above. 

\section{Integrative Analysis of Gene Expression and Methylation Data for Three Patient Cohorts}
Figures \ref{fig:compfirst}-\ref{fig:complast} show association of the first six components with clinical features. Figure \ref{fig:survclust} shows survival plots stratified by cluster assignment.

\subsection{Data Selection, Preprocessing and Normalization} \label{sec:expdata} Genomic data from GBM cancer patients from TCGA is publicly available through UCSC cancer genomics browser \citep{goldman_zhu_2014}. The data contains log-transformed RNASeq counts belonging to 20530 genes for 172 patients. It also contains measurement of methylation at 27578 DNA sites for a partially overlapping group of 288 patients. The measurements are ratios (so-called $\beta$-values) with 0 being unmethylated and 1 being methylated. We quantile normalized each RNASeq sample, transforming the counts to follow a standard normal distribution based on the order of genes by count. This removes the dependency of counts on the sequencing depth and ensures that data follows a normal distribution. For each methylation sample we subtracted the sample mean from each measurement. Next, we removed sites having median absolute deviation equal to zero, that is constant across at least half of the patients. This lowered the site count by 2597.
We removed half of the genes and half of the sites with lowest variance.
We centered genes and scaled them to have unit variance. We also centered the data for each methylation site. We divided the patients into three cohorts based on data availability. Cohort 1 contains 96 patients for which we have gene expression data but not methylation data. Cohort 2 contains 76 patients for which we have both gene expression and methylation data. Cohort 3 contains 212 patients for which we have methylation data but no gene expression data. We focus our analysis on four cellular pathways (mTOR, Ras, MAPK and PI3K-Akt) that are involved in cell proliferation, survival, migration and angiogenesis (forming of new blood vessels) and are involved in GBM pathogenesis \citep{pearson_targeting_2017}. We downloaded gene lists for each pathway from KEGG \citep{kanehisa_goto_2000}. A total of 760 genes were in at least one of the four pathways. Of these 760 genes 297 remain after the above described gene filtering. Out of the remaining DNA sites 305 sites are believed to be associated with any of these 297 genes, according to the Infinium MethylationEPIC v1.0 B4 Manifest file obtained from Illumina. We use all genes and DNA sites that remain after gene and site filtering to calculate covariance matrices between cohorts 1 and 2 as well as between cohorts 2 and 3. These two matrices are included in our analysis with the role of prior information of covariance between the data matrices of gene expressions, and the data matrices of methylation, respectively.
A small proportion (99 elements) of data are missing in the methylation data.
Lastly, we scale each matrix so that the Frobenius norm of its first SVD component is proportional to the number of observations (rows) in the matrix. To summarize, the data consists of three patient cohorts of sizes 96, 76 and 212, 297 genes and 305 DNA sites.

\begin{figure}
  \centering
    \includegraphics[width=\textwidth]{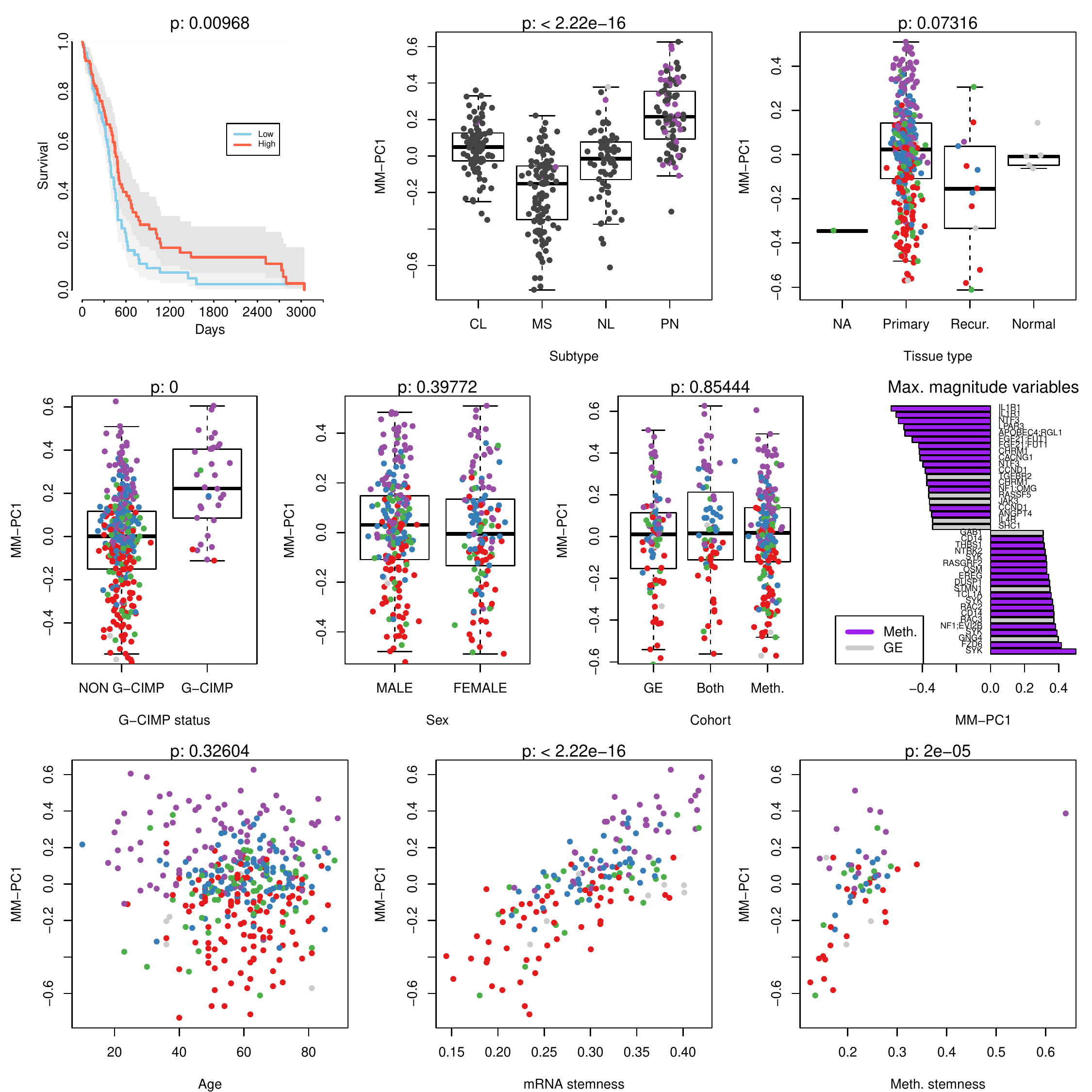}
  \caption{MM-PCA component 1. Colors in the subtype box-plot are purple for patients with G-CIMP phenotype, black for non G-CIMP and gray for unknown. Colors in other scatter- and box-plots are as in Figure \ref{fig:bioint}B: CL: blue, MS: red, NL: green, PN: purple, unknown: gray. Tests for computing p-values are specified in section 4.2 of the article. The survival plot contrasts survival times of patients with positive loadings to survival times of patients with negative loadings. The horizontal bar-plot shows the variables with maximal and minimal loadings.}
  \label{fig:compfirst}
\end{figure}
\begin{figure}
  \centering
    \includegraphics[width=\textwidth]{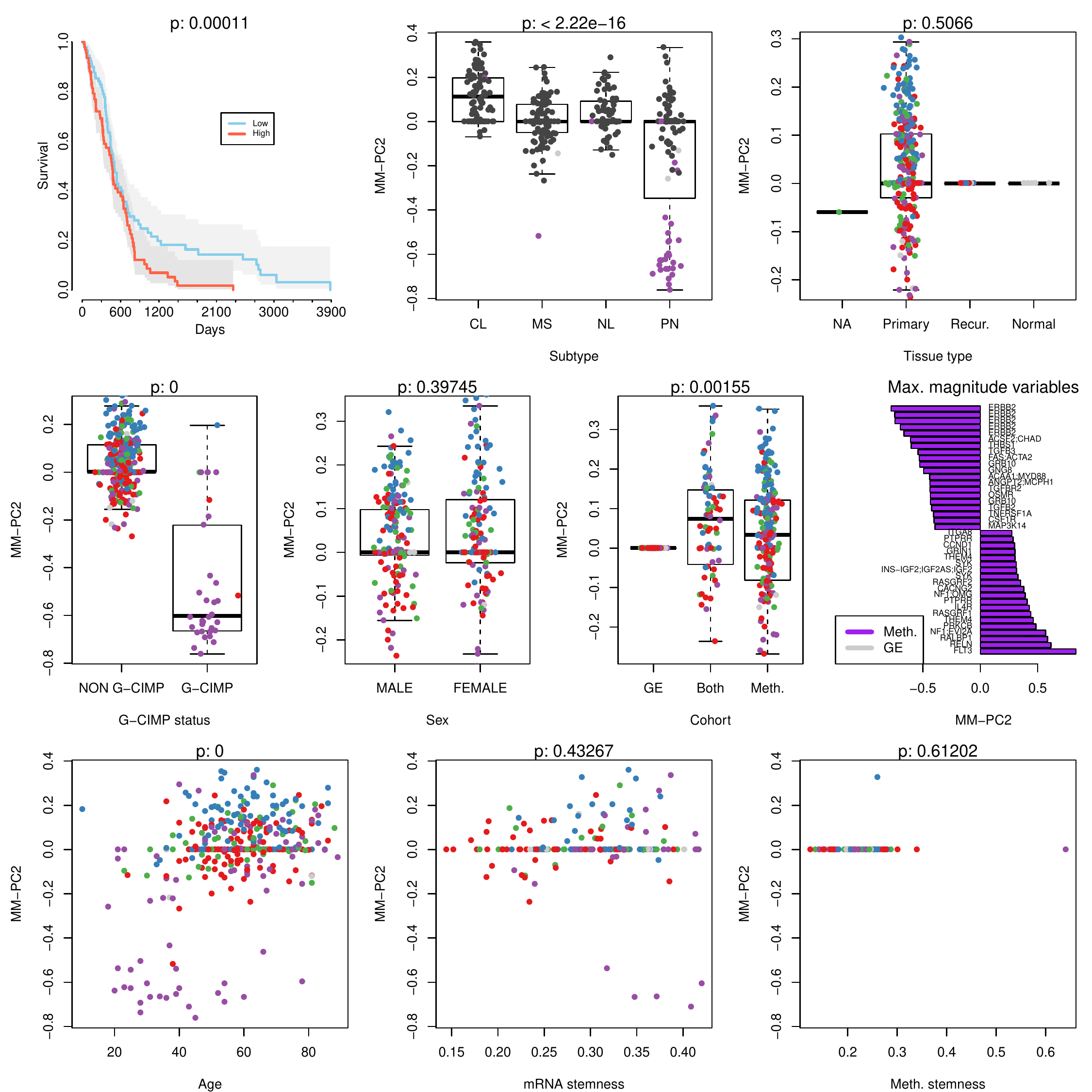}
  \caption{MM-PCA component 2. See caption of Figure \ref{fig:compfirst}.}
\end{figure}
\begin{figure}
  \centering
    \includegraphics[width=\textwidth]{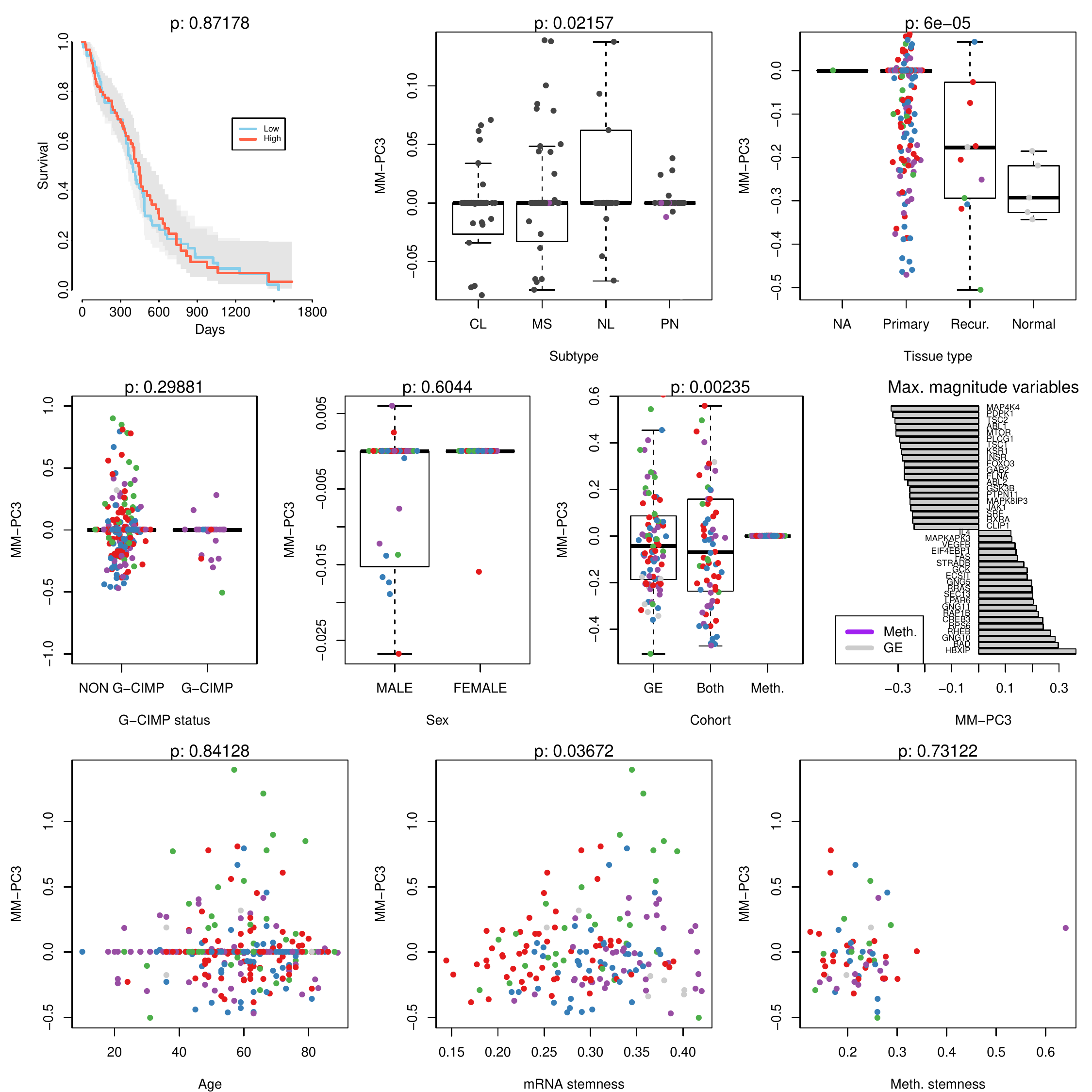}
  \caption{MM-PCA component 3. See caption of Figure \ref{fig:compfirst}.}
\end{figure}
\begin{figure}
  \centering
    \includegraphics[width=\textwidth]{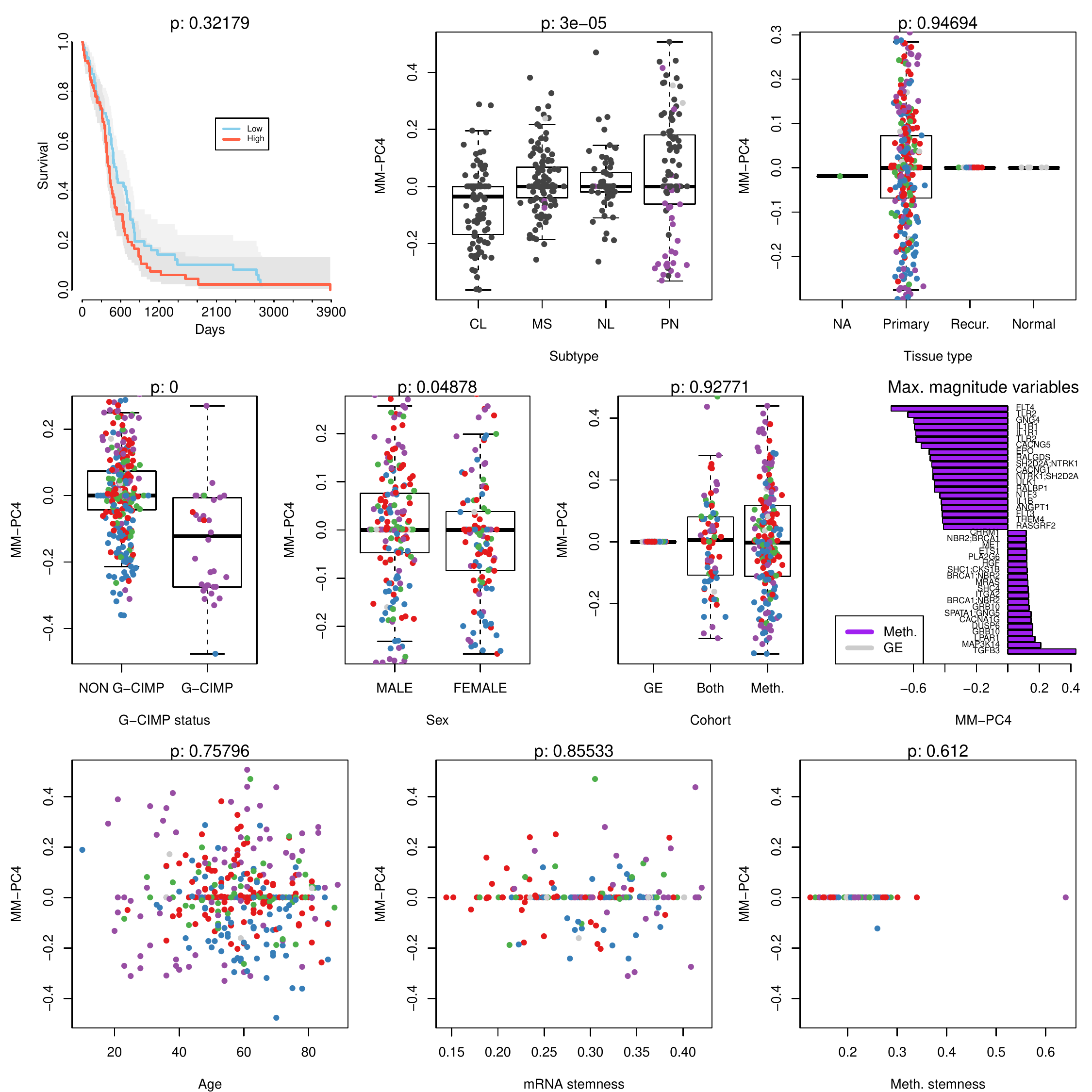}
  \caption{MM-PCA component 4. See caption of Figure \ref{fig:compfirst}.}
\end{figure}
\begin{figure}
  \centering
    \includegraphics[width=\textwidth]{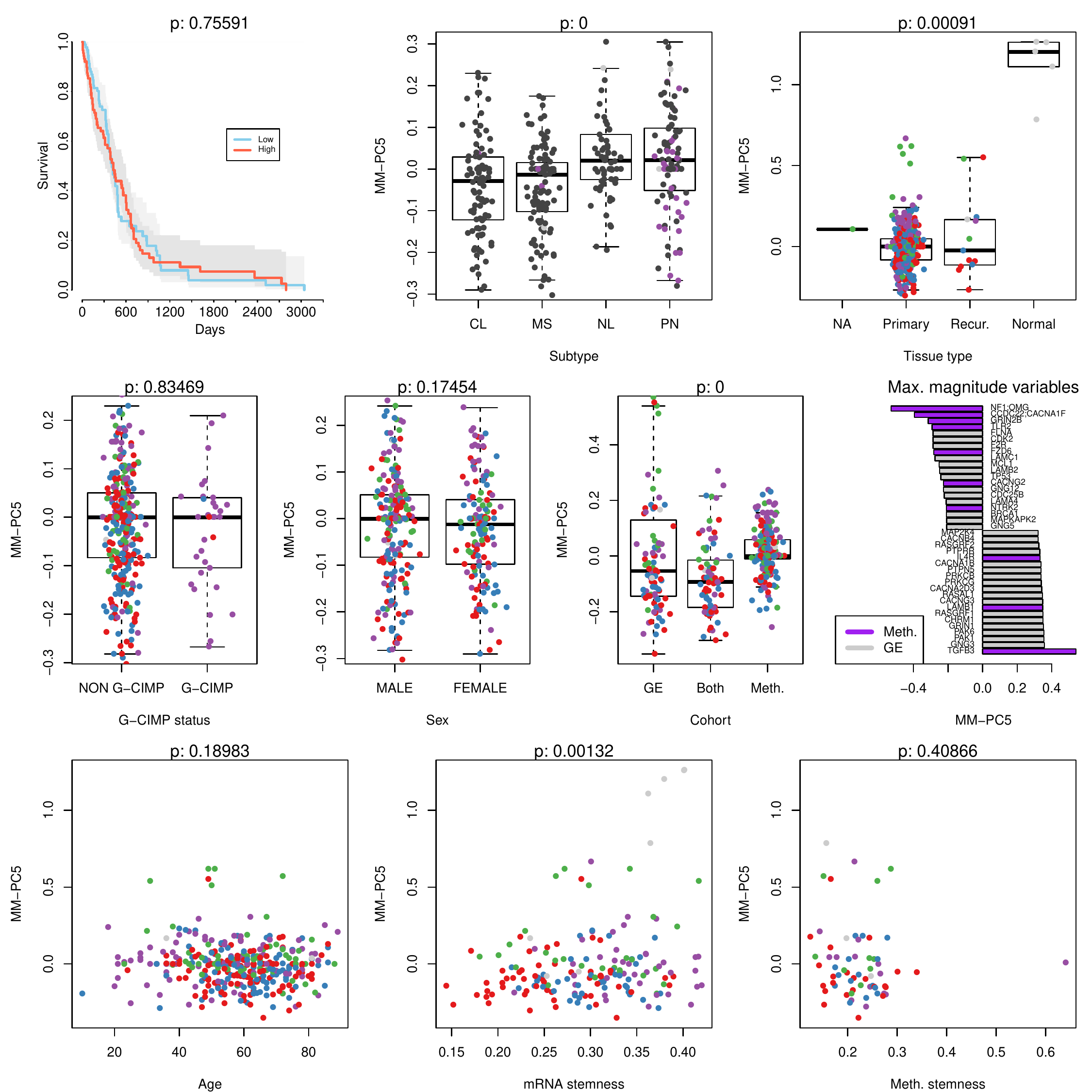}
  \caption{MM-PCA component 5. See caption of Figure \ref{fig:compfirst}.}
\end{figure}
\begin{figure}
  \centering
    \includegraphics[width=\textwidth]{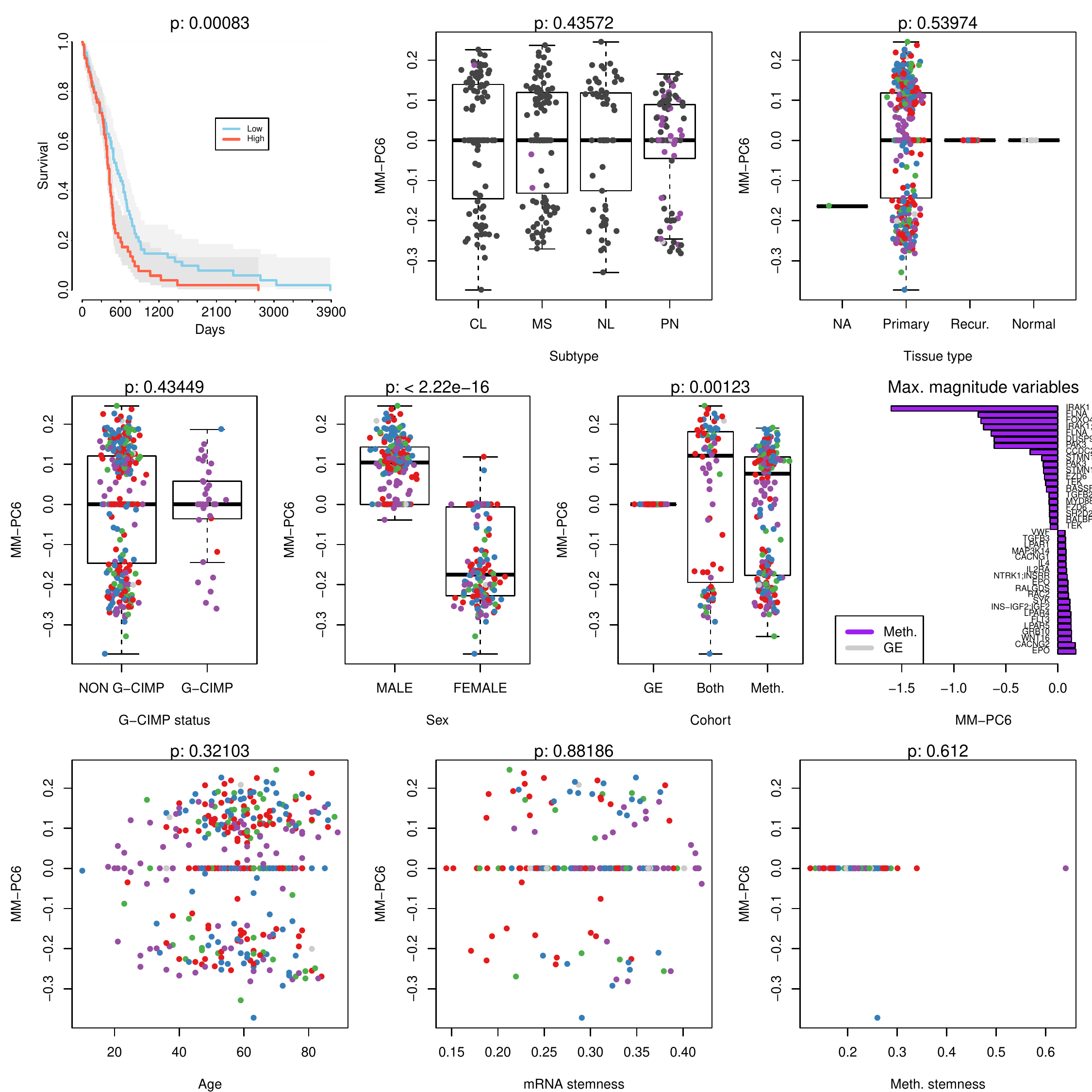}
  \caption{MM-PCA component 6. See caption of Figure \ref{fig:compfirst}.}
  \label{fig:complast}
\end{figure}
\begin{figure}
  \centering
    \includegraphics[width=\textwidth]{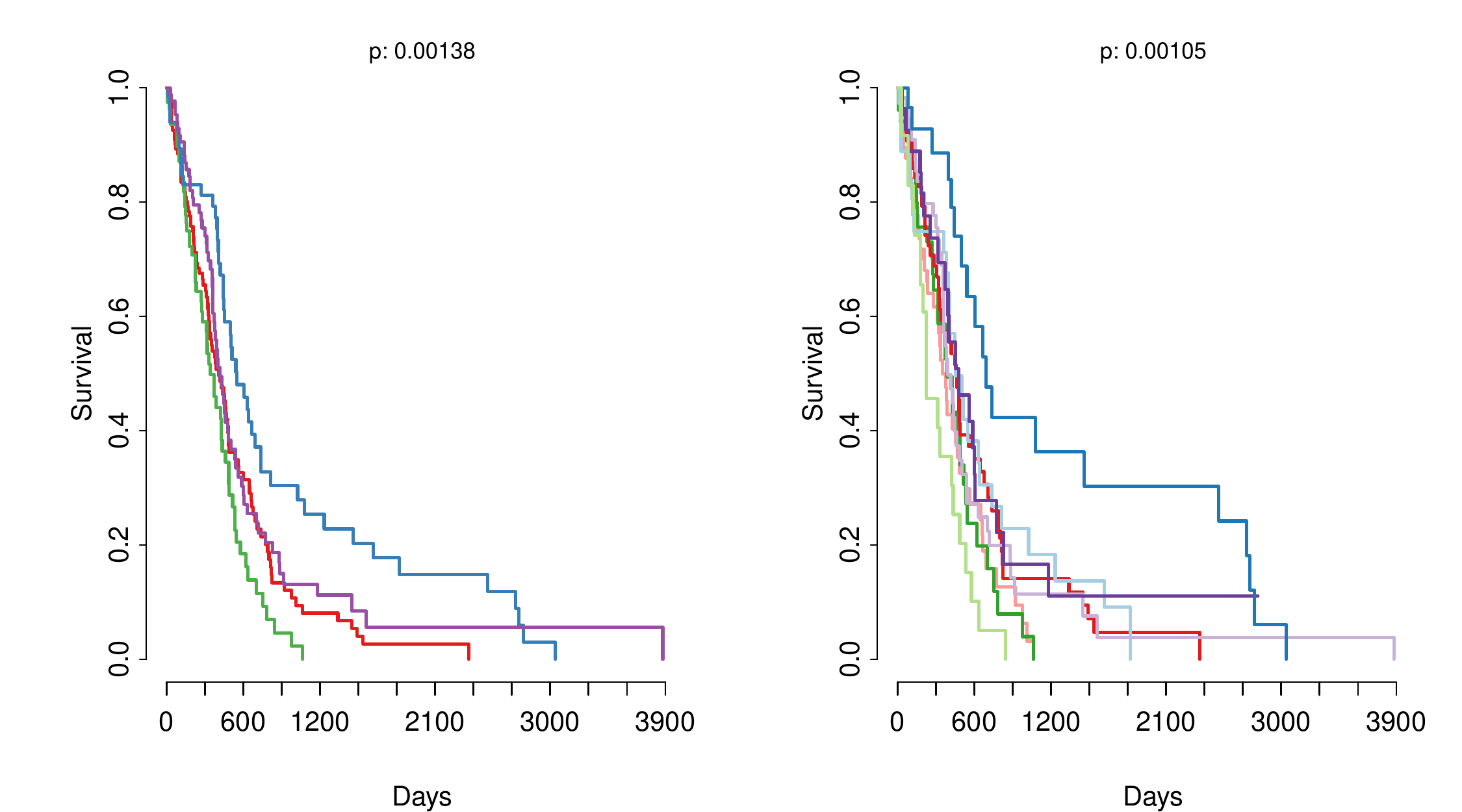}
  \caption{Survival plots for patients stratified by MM-PCA bi-clustering. The first plot uses cluster assignments based on two components. The second plots uses cluster assignments based on three components. Colors refer to cluster colors from Figure \ref{fig:bioint}A. P-values are based on Cox regression for patients in the (dark) blue cluster versus all other patients.}
  \label{fig:survclust}
\end{figure}

\end{document}